%% file: main.tex
\documentclass[runningheads]{llncs}

\usepackage{eccv}

\usepackage{eccvabbrv}
\usepackage{enumitem}

\usepackage{mathtools}
\usepackage{graphicx}
\usepackage{booktabs}
\usepackage{multirow}
\usepackage{colortbl} 
\usepackage{placeins} 

\usepackage[accsupp]{axessibility}  

\usepackage[breaklinks,colorlinks,citecolor=eccvblue]{hyperref}
\usepackage[pass]{geometry} 

\usepackage{orcidlink}

\definecolor{oursblue}{RGB}{222,240,255}
\definecolor{baselinegray}{RGB}{245,245,245}
\definecolor{timegray}{RGB}{230,230,230}

\newcommand{\method}{CubicSplat\xspace}
\newcommand{\BezierSplat}{B\'ezier Splatting\xspace}
\newcommand{\DiffVG}{DiffVG\xspace}
\newcommand{\LIVE}{LIVE\xspace}
\newcommand{\LIVSS}{LIVSS\xspace}
\newcommand{\ThreeDGS}{3D Gaussian Splatting\xspace}

\AtBeginDocument{%
  \crefname{figure}{Fig.}{Figs.}
  \Crefname{figure}{Figure}{Figures}
  \crefname{equation}{Eq.}{Eqs.}
  \Crefname{equation}{Equation}{Equations}
  \crefname{subsection}{Sec.}{Secs.}
  \Crefname{subsection}{Section}{Sections}
  \crefname{subsubsection}{Sec.}{Secs.}
  \Crefname{subsubsection}{Section}{Sections}
  \crefname{theorem}{Theorem}{Theorems}
  \Crefname{theorem}{Theorem}{Theorems}
  \crefname{lemma}{Lemma}{Lemmas}
  \Crefname{lemma}{Lemma}{Lemmas}
  \crefname{proposition}{Proposition}{Propositions}
  \Crefname{proposition}{Proposition}{Propositions}
  \crefname{corollary}{Corollary}{Corollaries}
  \Crefname{corollary}{Corollary}{Corollaries}
  \crefname{definition}{Definition}{Definitions}
  \Crefname{definition}{Definition}{Definitions}
  \crefname{remark}{Remark}{Remarks}
  \Crefname{remark}{Remark}{Remarks}
}

\usepackage{graphicx}

\begin{document}

\title{CubicSplat: Differentiable Vector Graphics via Error-Bounded Forward Relaxation}

\titlerunning{CubicSplat}

\author{Chenglong Liu\inst{1}\orcidlink{0009-0009-2422-7529} \and
Xin Zhang\inst{1}\orcidlink{0009-0006-8048-7091} \and
Yimeng Zhu\inst{1}\orcidlink{0009-0002-2231-7650} \and
Liyang He\inst{1}\orcidlink{0000-0002-1609-0747} \and
Yixiao Ma\inst{1}\orcidlink{0009-0002-7849-6950} \and
Yu Su\inst{2,3}\orcidlink{0000-0002-7950-4919} \and
Zhenya Huang\inst{1}\orcidlink{0000-0003-1661-0420} \and
Qi Liu\inst{1}\orcidlink{0000-0001-6956-5550}\thanks{Corresponding author: qiliuql@ustc.edu.cn.}}

\authorrunning{C.~Liu et al.}

\institute{State Key Laboratory of Cognitive Intelligence, University of Science and Technology of China, Hefei, 230088, China
\email{\{sepcnt,zx2020,zzyymm,heliyang,mayx\}@mail.ustc.edu.cn}\\
\email{\{huangzhy,qiliuql\}@ustc.edu.cn}\\ \and
Hefei Normal University, Hefei, 230061, China\\
\email{yusu@hfnu.edu.cn}\\ \and
Zhejiang Key Laboratory of Intelligent Education Technology and Application, Zhejiang Normal University, Jinhua, 321004, Zhejiang, China
}

\maketitle

\input{sec/0_abstract}
\input{sec/1_intro}
\input{sec/2_related}
\input{sec/3_method}
\input{sec/4_experiments}
\input{sec/5_conclusion}

\section*{Acknowledgements}
This work was supported by grants from the National Natural Science Foundation of China (62525606, U25B2072) and Open Research Fund of Zhejiang Key Laboratory of Intelligent Education Technology and Application (No. 2025ZNJYKF006).

%
%
\bibliographystyle{splncs04}
\bibliography{main}

\clearpage
\newgeometry{margin=1in}
\appendix

\setcounter{section}{0}
\renewcommand{\thesection}{S\arabic{section}}
\setcounter{equation}{0}
\renewcommand{\theequation}{S\arabic{equation}}
\setcounter{figure}{0}
\renewcommand{\thefigure}{S\arabic{figure}}
\setcounter{table}{0}
\renewcommand{\thetable}{S\arabic{table}}

\renewcommand{\theHsection}{supp.\arabic{section}}
\renewcommand{\theHsubsection}{supp.\arabic{section}.\arabic{subsection}}
\renewcommand{\theHsubsubsection}{supp.\arabic{section}.\arabic{subsection}.\arabic{subsubsection}}
\renewcommand{\theHfigure}{supp.\arabic{figure}}
\renewcommand{\theHtable}{supp.\arabic{table}}
\renewcommand{\theHequation}{supp.\arabic{equation}}

\begin{center}
{\LARGE\textbf{Supplementary Materials}}\\[0.5em]
{\large CubicSplat: Differentiable Vector Graphics via Error-Bounded Forward Relaxation}
\end{center}

\input{sec/6_appendix}
\FloatBarrier

\end{document}

%% file: sec/0_abstract.tex
\begin{abstract}
Vector graphics are prized for their resolution independence, compact storage, and direct editability, making differentiable optimization of their parametric primitives an attractive goal. Yet classical rasterization is discontinuous with respect to geometry, and existing remedies that smooth the forward pass demand increasingly elaborate heuristics as scene complexity grows. We trace this fragility to a \textbf{gradient seesaw}: design choices that improve forward geometric exactness can systematically degrade the induced gradient signal, and vice versa. To navigate this tension we introduce \textbf{CubicSplat}, a differentiable vector rasterizer that replaces B\'{e}zier closest-point solvers with uniform polyline surrogates whose geometric error is bounded at $\mathcal{O}(S^{-2})$. The resulting static computation graph yields well-conditioned gradients by construction, while a compositing-derived visibility mechanism prunes degenerate primitives without auxiliary regularization. On DIV2K and Kodak benchmarks CubicSplat achieves state-of-the-art reconstruction quality with over \textbf{2\,dB} PSNR gain in the closed-fill setting, while training up to $\textbf{4}{\times}$ faster than prior methods. The code is available at \url{https://github.com/CubicSplat/repo}

\keywords{Differentiable rendering \and Vector graphics \and Scene representation \and Gradient conditioning}
\end{abstract}

%% file: sec/1_intro.tex
\section{Introduction}
\label{sec:intro}

Vector graphics (VGs) encode images as resolution-independent, compact, and directly editable parametric primitives. Fitting these parameters by gradient descent is conceptually elegant but hindered by discontinuous rasterization, which yields gradients that are either zero almost everywhere or explosive near boundaries. A natural remedy is to redesign the forward pass so that pixels respond smoothly to geometric change.

In 3D mesh rendering, SoftRas~\cite{softras} trades a softened forward image for smooth gradients, while nvdiffrast~\cite{nvdiffrast} preserves a crisp forward pass and recovers boundary gradients via deferred antialiasing. DiffVG~\cite{Li:2020:DVG} brings the same idea to 2D vector graphics, replacing hard inside/outside decisions with area-averaged coverage. Follow-up methods such as LIVE~\cite{xu2022live}, LIVSS~\cite{livss}, SGLIVE~\cite{sglive}, and B\'{e}zier Splatting~\cite{liu2025bezier} have achieved impressive results, yet a persistent frustration remains: as scenes grow in complexity, these methods require opacity heuristics, auxiliary regularization losses, and ad-hoc annealing schedules to keep optimization from collapsing. The sheer volume of such scaffolding suggests that something more fundamental is at play.

A simple experiment exposes the missing piece (\cref{sec:ablation}). Increasing forward fidelity (finer sampling, adaptive subdivision, tighter B\'{e}zier approximation) monotonically reduces geometric error,
yet the quality of the learned representation plateaus almost
immediately while runtime grows and optimization can become less stable. This plateau is hard to reconcile with approximation error as the bottleneck, but follows naturally from the dual role of the forward pass: it must be accurate enough to evaluate the objective, yet more critically, it must induce gradients well-conditioned enough to navigate
the objective landscape.

We identify this tension as the \textbf{gradient seesaw}: 
design choices that improve forward exactness can systematically degrade the quality of the induced gradient signal, while an overly coarse forward approximation can itself become the binding source of error. 
Three concrete mechanisms illustrate the effect: \textbf{(i)}~\textbf{Exact forward, ill-conditioned backward.}
Differentiating through iterative closest-point or root-finding~\cite{Li:2020:DVG} procedures for exact B\'{e}zier distances introduces ill-conditioning at degenerate configurations (Appendix A.2). The forward model is geometrically faithful, yet the gradients it induces become unreliable precisely where optimization needs them most.
\textbf{(ii)}~\textbf{Smooth gradients, bloated forward.}
Point-sample based methods~\cite{liu2025bezier} that place Gaussian splats along curves yield well-behaved gradients, yet the discrete summation can deviate structurally from the true geometry around high-curvature regions and does not commute with spatial rescaling (Appendix A.3). When the learned vector representation is rendered at resolutions higher than those used during training, the fixed sampling budget cannot keep pace, and the reconstructed image progressively diverges from the intended geometry. \textbf{(iii)}~\textbf{Adaptive refinement, discontinuous graph.}
Adaptive subdivision reduces forward geometric error on demand, yet each discrete split event changes the computation graph, producing non-smooth, heteroscedastic gradient variance across primitives that undermines consistent optimization.

The upshot is that fidelity operates in a regime of diminishing, and eventually negative, returns: a more exact forward model can be a worse gradient oracle, yet a forward model that is too inexact optimizes the wrong objective.

Motivated by this analysis, we propose \textbf{Error-bounded Forward Relaxation}: we intentionally relax the forward rasterization to obtain a stable, well-conditioned gradient oracle while bounding the induced forward bias. We instantiate this principle in \textbf{CubicSplat}, a tile-parallel differentiable vector graphics renderer that achieves throughput comparable to B\'{e}zier Splatting without coupling geometric fidelity or multi-scale reconstruction quality to a per-primitive sampling budget. The core idea is to approximate each B\'{e}zier curve with a uniform polyline surrogate so that all distance queries reduce to closed-form point-to-segment computations and no iterative solver enters the backward pass. This eliminates the dominant source of gradient ill-conditioning while introducing only a bounded, controllable geometric bias. CubicSplat further supports closed shapes through winding-number fill and handles occlusion via standard Porter--Duff compositing, providing a unified treatment of open strokes and filled regions without auxiliary sampling or ad-hoc regularization. As a practical byproduct, the compositing pass naturally yields per-primitive visibility statistics that we exploit to maintain efficiency at large primitive counts.

Our experiments validate these claims. On a 200-image DIV2K subset with $N{=}1024$ primitives, CubicSplat achieves $25.78$\,dB (closed) / $25.93$\,dB (open) versus B\'{e}zier Splatting's $23.45$ / $25.45$\,dB, while reducing wall-clock training time by $4{\times}$ (closed) and $1.8{\times}$ (open). The method is also more parameter-efficient: with only $256$ curves CubicSplat already matches the PSNR that DiffVG requires $1024$ curves to reach. A single forward--backward step on a $2040{\times}1344$ image with 2048 curves is $1.4{\times}$ faster than B\'{e}zier Splatting, $120{\times}$ faster than DiffVG. Beyond these core results, rendering remains resolution-faithful up to $24\mathrm{K}$ and primitive counts scale to $32\mathrm{K}$ on a single consumer GPU. Ablations further confirm sampling-density invariance, reinforcing that the polyline surrogate acts as a better-conditioned optimization oracle rather than a crude approximation.

Our contributions are as follows:
\begin{itemize}
  \item We identify the \textbf{forward-fidelity vs.\ gradient-quality tension} intrinsic to differentiable vector rendering and provide empirical evidence through a broad sampling-density invariance regime.
  \item We propose \textbf{Error-bounded Forward Relaxation}, bringing the classical polyline approximation into a principled oracle-design framework: a family $R_S = R \circ \Pi_S$ of surrogate renderers with $\mathcal{O}(S^{-2})$ geometric bias, paired with compositing-derived visibility pruning that replaces explicit geometric regularization with renderer-intrinsic capacity allocation.
  \item We demonstrate that \textbf{CubicSplat} achieves state-of-the-art quality--efficiency tradeoffs, substantially outperforming strong baselines on both open and closed primitives while requiring a fraction of the training time.
\end{itemize}

%% file: sec/2_related.tex
\section{Related Work}
\label{sec:related}

\subsection{Differentiable Rasterization}
Early differentiable renderers recover gradients through boundary integrals or Monte Carlo edge sampling~\cite{loper2014opendr,kato2018neural3d}. SoftRas~\cite{softras} introduces probabilistic rasterization that yields smooth gradients at the cost of a softened forward image, whereas nvdiffrast~\cite{nvdiffrast} preserves a crisp forward pass and recovers boundary gradients via deferred antialiasing.
Approximate differentiable renderers~\cite{keselman2022algebraic} deliberately trade forward-image fidelity for smooth gradients, and Wang et al.~\cite{sdf_simple} explicitly analyze the bias-variance tradeoff governed by a smoothing parameter in SDF rendering. For splatting-style rasterizers, unbiased gradient estimation under stochastic sampling has also been studied~\cite{mueller2022poisson}.

\subsection{Differentiable Vector Graphics}
\DiffVG~\cite{Li:2020:DVG} introduced a fully differentiable rasterizer for vector graphics with both an analytical prefiltering mode and a Monte Carlo antialiasing mode. The analytical path differentiates through iterative cubic B\'ezier closest-point solvers, whose ill-conditioning we analyze in Appendix A.2; the Monte Carlo mode avoids this issue but incurs substantial per-pixel sampling cost, making it prohibitive for high-resolution scenes.
\LIVE~\cite{xu2022live}, \LIVSS~\cite{livss}, and SGLIVE~\cite{sglive} reduce inter-curve gradient interference by incrementally adding paths. On the representation side, Im2Vec~\cite{reddy2021im2vec} learns
differentiable path representations via closed-form B\'ezier integrals, and CLIPasso~\cite{clipasso} optimizes stroke sequences under semantic losses.

\BezierSplat~\cite{liu2025bezier} accelerates differentiable vector rendering by distributing 2D Gaussian primitives along B\'ezier curves. Because image formation relies on discrete point samples\cite{zhang2024gaussianimage,zhang2024imagegscontentadaptiveimagerepresentation}, the method inherits phase- and density-dependent aliasing that requires denser sampling to suppress, and its discretized summation does not commute with spatial rescaling, coupling optimization outcomes to sampling hyperparameters (Appendix A.3). Our method sidesteps these issues with an explicit, error-bounded polyline surrogate that delivers stable gradients without tying approximation quality to a per-primitive sampling budget.

\subsection{Gaussian Splatting and Gradient Quality}
\ThreeDGS~\cite{kerbl3Dgaussians} and its variants~\cite{scaffoldgs,Huang2DGS2024,10.1007/978-3-031-73016-0_23,guedon2023sugar,guo2024tetsphere} achieve real-time novel-view synthesis through differentiable Gaussian rasterization. Recent studies expose several gradient pathologies in this family of methods: approximate opacity compositing versus volumetric rendering~\cite{celarek2025}, intra-primitive gradient collision~\cite{absgs}, opacity bias from densification rules~\cite{revisedensification}, and forward/backward depth-order mismatch~\cite{stopthepop}. These findings reinforce the broader theme that forward approximations can have outsized effects on gradient quality.

%% file: sec/3_method.tex
\section{Method}

We view differentiable vector graphics learning as gradient-oracle design.
Rather than pushing the forward renderer toward maximal geometric exactness,
we construct a family of relaxed forward passes that yield stable,
well-conditioned gradients while keeping the forward bias bounded and
interpretable.

\begin{figure}[t]
  \centering
  \includegraphics[width=\linewidth]{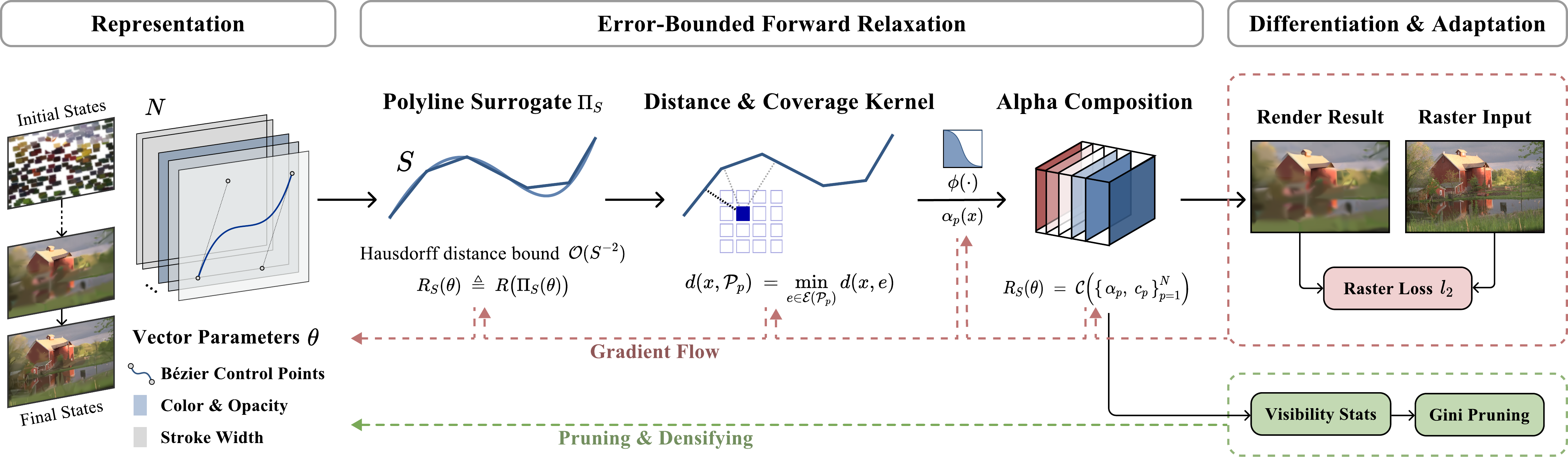}
  \caption{Overview of the \textbf{CubicSplat} pipeline. 1) Each primitive
is encoded as a set of learnable B\'{e}zier parameters. 2) The forward pass applies the
error-bounded relaxation: a polyline surrogate $\Pi_S$ converts
B\'{e}zier geometry into closed-form distance queries, and a smooth
coverage kernel $\phi$ maps the signed residual to per-pixel opacity,
which is assembled into the final image via alpha compositing. 3) The raster loss drives gradient updates back to the vector parameters, while compositing-derived visibility
statistics feed into Gini pruning and densification.}
  \label{fig:structure}
\end{figure}

\subsection{The Framework of CubicSplat}
\label{sec:overview}

Let $\theta$ collect the continuous parameters of $N$ layered vector
primitives: control points, thickness $r_p$, color $c_p$, and opacity.
The front-to-back layer order is fixed throughout optimization.
Given a target image $I^*$, we solve
\begin{equation}
  \min_{\theta}\;\mathcal{L}\!\bigl(R_S(\theta),\,I^*\bigr),
\end{equation}
where $R_S$ is a relaxed differentiable rasterizer parameterized by
a sampling density $S$, and $\mathcal{L}$ is a pixel-space loss.

The renderer $R_S$ determines the gradient oracle
$\nabla_\theta\mathcal{L}$; in vector graphics, many common failures
(unstable convergence, phantom gradients under occlusion, sensitivity to
sampling rules) are better explained by oracle conditioning than by forward
image fidelity alone. We therefore design $R_S$ to optimize oracle quality
subject to a bounded forward bias, rather than to maximize geometric
exactness.

Concretely, $R_S$ decomposes into three stages that map parameters to
pixels:
\begin{equation}
\label{eq:pipeline}
  \alpha_p(x)
  \;=\;
  \underbrace{
    \phi\!\left(
      \frac{\delta_p(x,\Pi_S(\mathcal{B}_p))}{\tau_p}
    \right)
  }_{\text{coverage kernel (\cref{sec:coverage})}},
  \quad
  R_S(\theta)
  \;=\;
  \underbrace{
    \mathcal{C}\!\left(
      \left\{\,\alpha_p,\;c_p\,\right\}_{p=1}^{N}
    \right)
  }_{\text{compositing (\cref{sec:adapt})}},
\end{equation}
where $\delta_p(x)$ is a \textbf{topology-dependent signed residual}
(\cref{sec:coverage}) that encodes how far pixel $x$ lies inside or
outside primitive $p$, and:
\begin{enumerate}
\item $\Pi_S(\mathcal{B}_p)$ is a \textbf{polyline surrogate}
  (\cref{sec:surrogate}) that approximates each B\'ezier primitive with a
  uniform $S$-segment polyline, reducing all distance queries to
  closed-form point-to-segment computations and eliminating iterative
  solvers from the backward pass;
\item $\phi\!\bigl(\cdot\bigr)$ is a smooth, monotone \textbf{coverage
  kernel} (\cref{sec:coverage}) that maps the signed residual to soft
  per-pixel opacity via an anti-aliasing band of width $\tau_p$;
\item $\mathcal{C}$ is a front-to-back \textbf{alpha-compositing} pass
  (\cref{sec:adapt}) whose intermediate transmittance values yield
  per-primitive visibility statistics, enabling inequality-triggered
  pruning via renderer-derived signals rather than explicit geometric
  regularization.
\end{enumerate}

The sampling density $S$ enters only through the surrogate $\Pi_S$:
increasing $S$ monotonically reduces geometric approximation error at
$\mathcal{O}(S^{-2})$, but our ablations reveal a broad
solution-equivariance regime beyond a modest threshold $S_0$, confirming
that $S$ functions as a relaxation knob for oracle conditioning rather
than an approximation budget (\cref{sec:ablation}).

The remainder of this section details each stage. Only the training
oracle is relaxed; at evaluation time the learned parameters can be
rendered with an arbitrarily high $S$ or an exact B\'ezier renderer.

\subsection{Error-Bounded Forward Relaxation}
\label{sec:surrogate}

Consider a B\'ezier segment $\mathcal{B}(t)$, $t\in[0,1]$. We define a
surrogate operator $\Pi_S$ that maps each B\'ezier segment to a uniform
$S$-segment polyline:
\begin{equation}
\mathcal{P}_S = \Pi_S(\mathcal{B}) = \{v_0,\ldots,v_S\},\qquad
v_i = \mathcal{B}\!\left(\frac{i}{S}\right),
\end{equation}
and similarly for an entire primitive consisting of one or more B\'ezier
segments. Any fixed uniform rule fits the same framework; we adopt a
fixed rule to preserve a static computation graph.

This induces a family of relaxed renderers:
\begin{equation}
R_S(\theta)\ \triangleq\ R\big(\Pi_S(\theta)\big),
\end{equation}
where $\Pi_S(\theta)$ applies the polyline surrogate to all B\'ezier
geometry encoded in $\theta$.

Increasing $S$ monotonically improves geometric fidelity: $\Pi_S$
converges to the original B\'ezier geometry as $S\to\infty$. We bound
the geometric discrepancy between $\mathcal{B}$ and $\Pi_S(\mathcal{B})$~\cite{rockafellarwets1998, clarke1990, farin2002curves}
(e.g., a Hausdorff-distance bound decaying as $\mathcal{O}(S^{-2})$
under mild smoothness conditions), which implies that the induced
coverage bias is also controlled. This establishes $R_S$ as an
error-controlled relaxation of $R$; a formal treatment is given in
Appendix A.4.

While $\Pi_S$ becomes geometrically exact as $S$ increases, our
ablations reveal a broad solution-equivariance regime: there exists a
threshold $S_0$ such that for all $S\ge S_0$, optimizing
$\min_{\theta} \mathcal{L}(R_S(\theta), I^*)$ yields learned
representations $\theta_S$ that are geometrically equivalent and nearly
identical in reconstruction quality, even though the forward surrogate
continues to grow more accurate.

This behavior is inconsistent with an approximation-error-dominated
narrative. If optimization were primarily limited by forward
approximation error, increasing $S$ would systematically alter the
learned solution. Instead, beyond $S_0$, $S$ chiefly governs oracle
conditioning and runtime. We therefore interpret $S$ as a
forward-relaxation knob that trades forward fidelity for a
better-conditioned gradient oracle. Only the training oracle is relaxed:
the learned vector parameters can always be rendered with a higher $S$
or an exact B\'ezier renderer at evaluation time (\cref{sec:scaling}).

\subsection{Differentiable Coverage from Signed Residual}
\label{sec:coverage}

We define coverage through a smooth kernel applied to a signed residual
that measures how far a pixel lies inside or outside a primitive.

Let $\mathcal{P}_p$ denote the polyline surrogate for primitive $p$. The
Euclidean distance from a pixel center $x$ to the polyline is
\begin{equation}
d(x,\mathcal{P}_p)\ =\ \min_{e\in\mathcal{E}(\mathcal{P}_p)} d(x,e),
\end{equation}
where $\mathcal{E}(\mathcal{P}_p)$ is the set of line segments and
$d(x,e)$ is the closed-form point-to-segment distance. This avoids
iterative closest-point solvers and yields a stable computation graph.

We construct a signed residual $\delta_p(x)$ that is positive inside the
primitive and negative outside, encoding both geometry and topology in a
single continuous quantity:
\begin{equation}
\label{eq:signed_residual}
\delta_p(x, \Pi_S(\mathcal{B}_p)) \;=\;
\begin{cases}
  r_p \;-\; d(x,\,\mathcal{P}_p),
    & \text{open stroke},\\[4pt]
  s_p(x)\;\cdot\; d(x,\,\mathcal{P}_p),
    & \text{closed fill},
\end{cases}
\end{equation}
where $r_p\ge 0$ is the learnable stroke half-width, and
$s_p(x)\in\{-1,+1\}$ is the sign derived from a nonzero winding-number
test on the polyline boundary ($s_p=+1$ if $x$ is inside,
$s_p=-1$ otherwise). For open strokes, a pixel is inside when it
falls within distance $r_p$ of the curve; for closed fills, inside/outside
is determined entirely by the winding rule consistent with the nonzero fill convention.

The coverage kernel $\phi$ in \cref{eq:pipeline} maps the normalized
signed residual $\delta_p/\tau_p$ to soft opacity through a smooth,
monotone function $\phi:\mathbb{R}\to[0,1]$. The anti-aliasing band
$\tau_p>0$ controls the transition width: $\phi$ saturates to $1$ well
inside the primitive and decays to $0$ well outside, with the steepest
response at the boundary ($\delta_p=0$) where optimization most needs
informative gradients. The specific kernel form (half-sided Gaussian for
strokes, logistic sigmoid for fills) and its motivation are detailed in
Appendix A.5.

The winding-number sign $s_p(x)$ is a discrete quantity; we treat it as
constant with respect to $\theta$ and route all gradients through the
continuous distance $d(x,\mathcal{P}_p)$. This is justified because
$s_p$ changes only when the polyline boundary crosses the pixel, a
codimension-1 event in parameter space
(Appendix A.4). At the boundary itself, the kernel
$\phi$ already provides the relevant gradients for that transition. This
design is consistent with the observed $S$-equivariance in closed-mode
optimization: once the polyline boundary reaches sufficient fidelity,
the winding-number fill is already correct for nearly all pixels, and
learning is driven primarily by boundary motion under the prefiltered
band.

\subsection{Compositing and Visibility-Driven Adaptation}
\label{sec:adapt}

We render primitives in front-to-back order via standard alpha compositing
with initial transmittance $T_0(x)=1$ and per-primitive color $c_i$:
\begin{equation}
C(x)=\sum_{i=1}^N T_{i-1}(x)\,\alpha_i(x)\,c_i,\qquad
T_i(x)=T_{i-1}(x)\,(1-\alpha_i(x)).
\end{equation}

A key design principle is to avoid explicit geometric constraints on the
parametric curves (e.g.\ convexity or non-self-intersection penalties),
which would introduce nontrivial auxiliary losses and complicate the
optimization landscape. Instead, we allow arbitrary curve configurations
to emerge freely during training and detect degeneracy where it matters
by inspecting inexpensive, renderer-derived signals. Rasterization
already exposes per-pixel transmittance $T_{i-1}(x)$ and per-primitive
contribution $T_{i-1}(x)\,\alpha_i(x)$ as intermediate quantities. We
aggregate these into a per-primitive importance score:
\begin{equation}
w_p \;\propto\; \sum_x T_{p-1}(x)\,\alpha_p(x),
\end{equation}
optionally combined with a visibility rate and a geometric coverage
estimate. All terms are renderer-derived and available at negligible
overhead.

Rather than regularizing geometry, we monitor how concentrated importance
is across primitives. When a small subset dominates, as quantified by a
dispersion metric such as the Gini coefficient over $\{w_p\}$, we prune
primitives whose $w_p$ remains persistently small, directly removing
degenerate or fully occluded elements from the representation. Freed
capacity can optionally be reallocated by reinitializing new primitives near regions of high residual error, following a strategy
similar to LIVE~\cite{xu2022live}.

%% file: sec/4_experiments.tex
\section{Experiments}
\label{sec:experiments}

We evaluate \method on image vectorization, focusing on
(i)~reconstruction quality for a fixed primitive budget and
(ii)~efficiency in both single forward--backward pass and end-to-end training time.

\subsection{Datasets and Implementation Details}

We adopt the large-scale evaluation protocol introduced by B\'{e}zier Splatting~\cite{liu2025bezier}, which moved beyond the few low-resolution test images common in earlier work to diverse, high-resolution benchmarks. Specifically, we evaluate on a 200-image subset of DIV2K~\cite{Timofte_2017_CVPR_Workshops} and the full 24-image Kodak dataset~\cite{kodak1999}, all at original resolution. We report PSNR, SSIM, and LPIPS~\cite{zhang2018perceptual}, and compare against DiffVG~\cite{Li:2020:DVG}, LIVE~\cite{xu2022live}, LIVSS~\cite{livss}, and B\'{e}zier Splatting~\cite{liu2025bezier}. Following the standard DiffVG convention, we parameterize each closed primitive as two connected cubic B\'{e}zier segments and each open primitive as three connected cubic Bézier segments, using 10 control points in both cases. We optimize control points, color, width, and opacity with
Adan~\cite{adan} under an $\ell_2$ loss, using learning rates of $0.1$
for color and opacity, $10^{-3}$ for control points, and $10^{-4}$ for
width. We train for 12{,}000 steps in open mode and 10{,}000 steps in
closed mode without early stopping; all baselines use their respective
default configurations. The default polyline sampling density is
$S{=}24$ for open curves and $S{=}12$ for closed curves; the
quality--speed trade-off across $S$ is analyzed in \cref{sec:ablation}. Primitives are pruned and immediately densified
every 500 steps when the Gini coefficient exceeds 0.35, with pruning
disabled for the final 2{,}000 steps. At evaluation time, we render with $S^*{=}24$ or the training density,
whichever is larger, ensuring that the forward surrogate is at least as
accurate as during optimization.

\input{tables/speed}
\begin{figure}[tp]
  \centering
  \includegraphics[width=0.95\linewidth]{figures/showcase.pdf}
  \caption{Qualitative results across varying budgets, topology modes, and datasets~\cite{Timofte_2017_CVPR_Workshops,kodak1999,fluent}. At low budgets (closed 128, open/closed 256), CubicSplat preserves clean boundaries and smooth color gradients, producing coherent stylization rather than noisy artifacts. At higher budgets (open 1024, closed 2048), fine details such as the half-timbered fa\c{c}ade emerge faithfully, confirming that the gradient oracle scales from compact stylization to high-fidelity reconstruction.}
  \label{fig:showcase}
\end{figure}

\subsection{Single Training-step Speed}
\label{sec:speed}
\Cref{tab:speed} reports single training-step times on a
$2040{\times}1344$ image with 2048 curves at default sampling densities.
On open curves, \method completes a forward--backward pass in 6.77\,ms,
$\textbf{1.4}{\times}$ faster than \BezierSplat and over
$\textbf{120}{\times}$ faster than \DiffVG. On closed curves the gap
over \BezierSplat widens to roughly $\textbf{2}{\times}$, while \DiffVG
remains over an order of magnitude slower. The speed advantage stems from representation efficiency: B\'ezier
Splatting discretizes each curve into densely sampled Gaussian
primitives that can reach millions in closed-fill settings, whereas
CubicSplat requires only a modest number of polyline segments per curve
with closed-form distance queries, avoiding both dense splat
accumulation and iterative solvers.

\subsection{Quantitative Results}
\label{sec:quantitative}

\input{tables/comparison}
\begin{figure}[!t]
  \centering
  \includegraphics[width=0.95\linewidth]{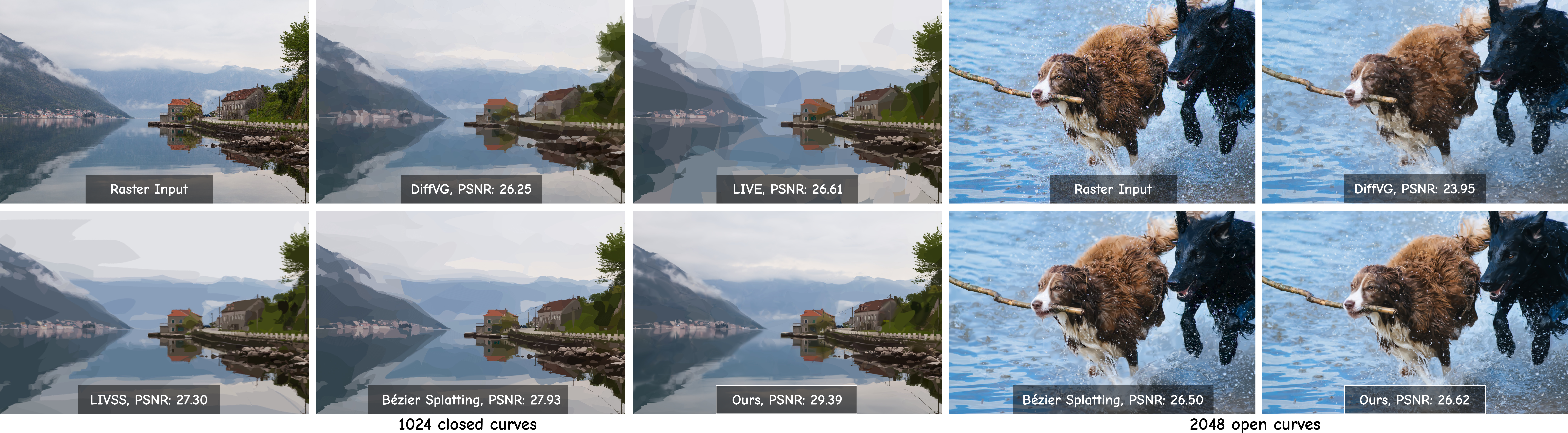}
  \caption{Qualitative comparison with state-of-the-art differentiable VG rasterization methods, including DiffVG~\cite{Li:2020:DVG}, LIVE~\cite{xu2022live}, LIVSS~\cite{livss}, and B\'{e}zier Splatting~\cite{liu2025bezier}. The lake scene illustrates gradient and reflection reconstruction, while the dog example highlights preservation of fine details such as eye color and facial structure.}
  \label{fig:lake_dog}
\end{figure}

\Cref{tab:comparison} and \Cref{tab:comparisons_kodak} summarize reconstruction quality and training cost for
various primitive counts under both modes.
\textbf{\method establishes new state-of-the-art} PSNR and SSIM while
substantially reducing end-to-end training time.
The gains are consistent in the open setting and become decisive for closed shapes,
where accurate filling and stable gradients are critical.

\textbf{\method is also more parameter-efficient}: visibility-driven pruning removes redundant and
occluded curves, concentrating capacity on primitives that actually contribute to the image.
As a result, \method with $256$ curves already matches the PSNR of DiffVG with
$1024$ curves in both modes (\cref{tab:comparison}).

The improvements are most pronounced in the closed mode.
At $N{=}1024$, \method delivers more than 2\,dB PSNR improvement over \BezierSplat while
cutting per-image training time by roughly $4{\times}$, demonstrating that the
quality gains accompany rather than trade against efficiency.
In the open mode, \method still improves PSNR and SSIM at every budget and reduces training
time markedly, yielding a strictly better quality--time trade-off.

We observe that B\'ezier Splatting can achieve lower LPIPS for open strokes despite worse PSNR and SSIM. We view this as a difference in failure mode, not fidelity. CubicSplat's error-bounded surrogate yields clean, piecewise-smooth coverage with sharp, scale-consistent boundaries (\cref{fig:detail}); at low budgets this produces faithful but stylized results that under-represent photographic micro-texture. Sample-based splatting instead introduces stochastic high-frequency fluctuations that can incidentally mimic fine texture (Appendix A.3), which LPIPS, itself trained on natural-image statistics, may favor despite lower pixel accuracy. As all methods optimize the same $\ell_2$ loss, the gap reflects a metric preference: LPIPS can reward texture-like variation over the clean, scale-consistent geometry targeted by low-budget vectorization.

\begin{figure}[!b]
  \centering
  \includegraphics[width=0.6\linewidth]{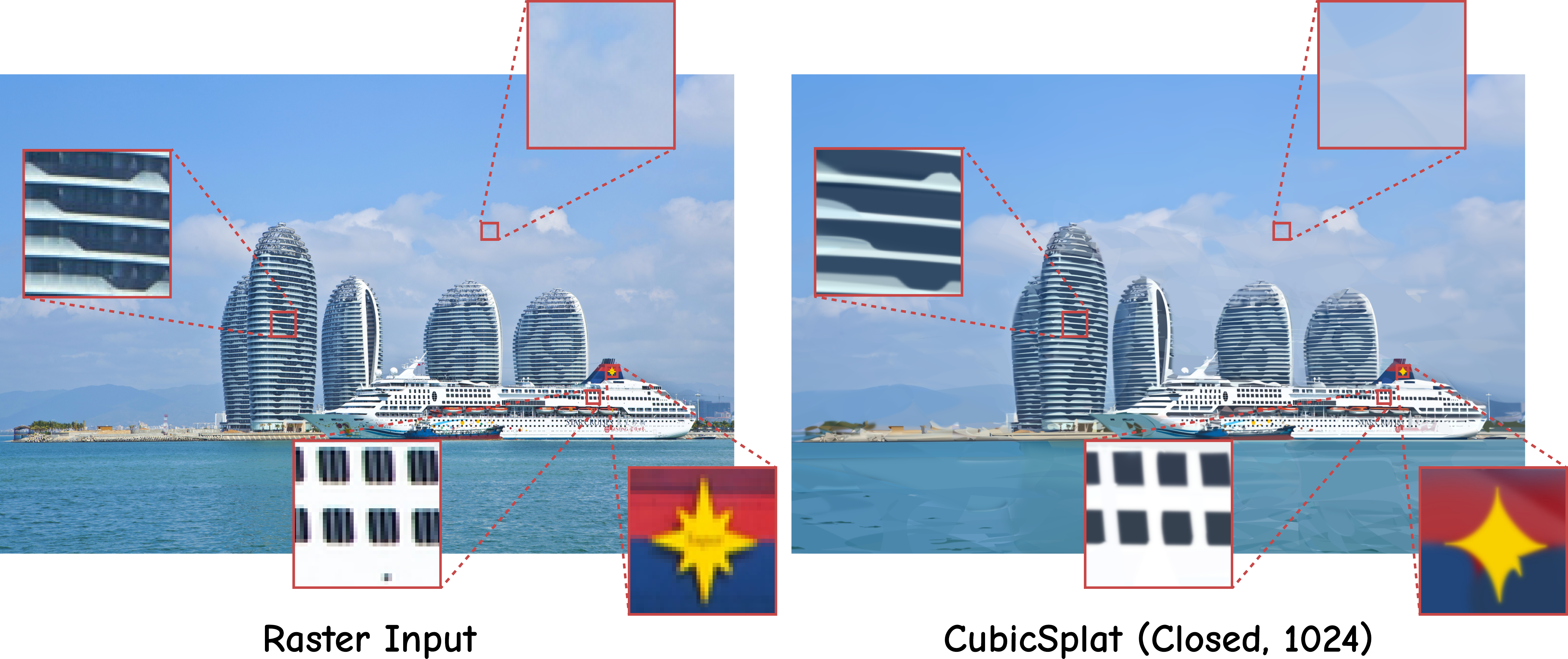}
  \caption{Close-up detail comparison. Raster input (left) and CubicSplat reconstruction with 1024 closed fills (right). Insets highlight preserved sharp architectural edges, coherent layer structure, and clean color boundaries using only parametric vector primitives, yielding stable fine details under magnification.}
  \label{fig:detail}
\end{figure}

\subsection{Ablations and Analysis}
\label{sec:ablation}

\input{tables/ablation}

\Cref{tab:div2k_ablation} isolates four design choices on the uniformly sampled
DIV2K subset under both topology modes and curve budgets $N\in\{256,512,1024\}$.
Taken together, the ablations support our central thesis: \method is an
error-controlled forward relaxation that improves the optimization oracle, and pushing
forward geometric exactness beyond a non-degeneracy threshold yields diminishing returns
in quality while imposing substantial performance costs.

When the curve budget or sampling density increases, additional work can be partially
absorbed by otherwise idle execution resources, producing sublinear scaling in observed
end-to-end time.
Reporting training throughput in tasks-per-minute (TPM) makes this effect explicit and
enables fair comparison across settings where hardware utilization differs.

\textbf{Sampling density $S$ validates forward relaxation.}
Increasing $S$ monotonically reduces geometric surrogate error and, in the limit, recovers an \textbf{exact} B\'{e}zier renderer; yet the learned solutions exhibit a broad equivariance regime once the surrogate is non-degenerate. In open mode, gains saturate beyond $S{=}24$ with negligible throughput change. In closed mode, saturation occurs even earlier because winding-number membership already provides exact interior labeling; $S{=}12$ attains essentially the same quality as $S{=}24$ at substantially higher throughput. These trends confirm that $S$ functions as a relaxation knob rather than an approximation budget, consistent with our error bound (Appendix A.4), which guarantees
decreasing forward bias but does not imply that optimization quality must improve once the oracle is well conditioned.

\textbf{Visibility-driven Gini pruning is essential at large $N$.}
Removing Gini pruning degrades PSNR increasingly with larger curve budgets in both
topology modes.
This aligns with the oracle perspective: as $N$ grows, occlusion depth increases and
more primitives become weakly visible or redundant, amplifying gradient interference
and wasting compute.
By explicitly tracking contribution and visibility and pruning low-importance primitives,
\method maintains an effective representation size and preserves gradient quality, with
the benefit growing as $N$ increases.

\textbf{Regularization reduces expressivity and is unnecessary under oracle stability.}
Adding a convexity-oriented regularizer consistently decreases reconstruction quality
across all budgets and both modes.
This indicates that the optimal vector representation for natural images is often
non-convex and geometrically complex; restricting the hypothesis class harms
expressivity without a compensating stability benefit.
In \method, stability derives from the relaxed oracle and visibility-aware adaptation:
pathological or redundant curves are automatically identified by low visibility and
pruned, making an explicit regularizer both unnecessary and counterproductive.

\textbf{Adaptive subdivision underperforms due to graph discontinuities.}
De Casteljau subdivision reduces forward geometric error but introduces discrete split events that alter the computation graph. The ablation shows that this strategy trails fixed-$S$ sampling in both quality and throughput, consistent with our analysis that subdivision boundaries induce gradient discontinuities and heteroscedastic updates (Appendix B). Fixed-$S$ relaxation preserves a stable graph and yields more reliable gradients, even when its forward approximation error is formally larger.

\subsection{Scaling Behavior}
\label{sec:scaling}

The error-controlled relaxation analyzed in \cref{sec:ablation} has direct consequences for scalability: because a low sampling density $S$ is not an engineering compromise but a mathematically bounded surrogate with negligible quality loss, CubicSplat can allocate its compute budget almost entirely to geometric expressiveness (more primitives, higher resolution) rather than to tighter curve approximation.

\begin{figure}[!b]
  \centering
  \includegraphics[width=0.9\linewidth]{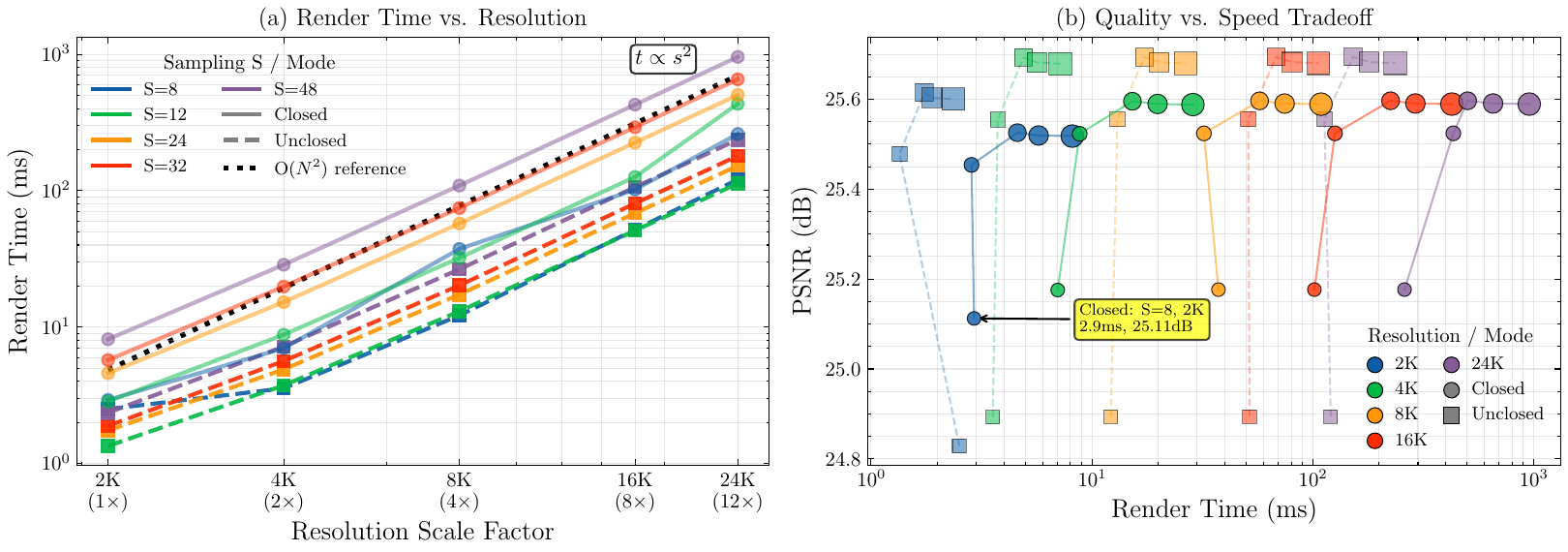}
  \caption{Resolution scalability of CubicSplat. (a)~Render time versus resolution scale at different sampling densities $S$. (b)~PSNR versus render time for all tested (resolution,~$S$) configurations. At native 2K, $S{=}8$ achieves near-optimal quality in 2.9\,ms; increasing $S$ to 48 yields $<$0.03\,dB gain at $4{\times}$ the cost. Original data refers to Appendix D.}
  \label{fig:speed}
\end{figure}

\begin{figure}[!b]
  \centering
  \includegraphics[width=0.95\linewidth]{figures/cartoon.pdf}
  \caption{Scaling with closed-fill count $N$ on a detailed illustration from the DanbooRegion dataset~\cite{DanbooRegion2020}. Insets rendered at \textbf{8$\times$ magnification} show \textbf{scale-consistent} fine structures under extreme primitive counts. PSNR improves steadily with sublinear training time and manageable memory overhead.}
  \label{fig:cartoon}
\end{figure}

\textbf{Quality is resolution-invariant up to extreme magnification.}
We train on 5 DIV2K~\cite{Timofte_2017_CVPR_Workshops} images using
1024 curves in both modes and evaluate forward rendering up to
$144{\times}$ more pixels on a single RTX~4090, computing PSNR after
downsampling to original resolution (Appendix D).
Render time scales as $t \propto s^2$ with resolution scale $s$
 in \cref{fig:speed}(a), while PSNR stays within 0.5\,dB of the 2K baseline
across all resolutions, sampling densities, and both modes.
\Cref{fig:speed}(b) confirms that quality is governed by spline geometry
rather than surrogate discretization: increasing $S$ from 12 to 48 in
closed mode changes PSNR by less than 0.03\,dB at ${\sim}4{\times}$ the
cost, consistent with the $\mathcal{O}(S^{-2})$ bound. At native 2K,
$S{=}8$ reaches near-saturated quality at 2.9\,ms (25.11\,dB, closed),
leaving headroom for real-time use. This robustness contrasts with
point-sampled splatting, where discrete sampling introduces
phase-sensitive artifacts under magnification.

\textbf{Optimization remains well conditioned at high primitive counts.}
We vary the number of closed curves $N$ from 2048 to 32768 on a detailed illustration (\cref{fig:cartoon}). PSNR improves consistently from 24.45\,dB to 38.09\,dB, indicating that the gradient oracle does not degrade as scene complexity grows. Training time scales sublinearly (82.5\,s to 220.8\,s) thanks to the tile-parallel architecture and visibility-based pruning, while video memory grows from 632\,MB to 2118\,MB. The combination of low per-primitive cost from the polyline surrogate and effective pruning from compositing-derived visibility statistics allows CubicSplat to operate comfortably at $32\mathrm{K}$ primitives on a single consumer GPU.

%% file: tables/speed.tex
\begin{table}[!t]
\centering
\caption{Single training-step time averaged over a $2040{\times}1344$ image
with 2048 curves on an RTX~4090.}
\vspace{-0.7em}
\resizebox{0.65\textwidth}{!}{
\begin{tabular}{l|ccc|ccc}
\toprule
& \multicolumn{3}{c|}{\textbf{Open curves}} & \multicolumn{3}{c}{\textbf{Closed curves}} \\
& DiffVG & \BezierSplat & Ours & DiffVG & \BezierSplat & Ours \\
\midrule
Forward & 141.3ms & 4.5ms & \textbf{2.70ms} & 85.2ms & 14.1ms & \textbf{10.68ms} \\
Backward & 701.3ms & 4.7ms & \textbf{4.07ms} & 448.3ms & 24.58ms & \textbf{8.96ms} \\
\midrule
Total step & 842.6ms & 9.2ms & \textbf{6.77ms} & 533.5ms & 38.68ms & \textbf{19.64ms} \\
\bottomrule
\end{tabular}}
\label{tab:speed}
\vspace{-0.5em}
\end{table}

%% file: tables/comparison.tex
\begin{table}[tp]
\centering
\caption{Quantitative evaluation on the DIV2K dataset~\cite{Timofte_2017_CVPR_Workshops} (200 images). Best results in \textbf{bold}, second-best \underline{underlined}. Training time is the average per image on an RTX 4090. \method achieves the highest PSNR and SSIM across all budgets and both modes while training in under 2.1\,min per image; in closed mode at $N{=}1024$ it surpasses B\'{e}zier Splatting by over 2\,dB PSNR at roughly $4{\times}$ lower cost.}
\vspace{-0.7em}
\setlength{\tabcolsep}{3pt}
\resizebox{0.99\textwidth}{!}{
\begin{tabular}{l|l|cccc|cccc|cccc}
\toprule
& \multirow{2}{*}{\textbf{Method}}
& \multicolumn{4}{c|}{\textbf{256 curves}}
& \multicolumn{4}{c|}{\textbf{512 curves}}
& \multicolumn{4}{c}{\textbf{1024 curves}} \\
& & SSIM$\uparrow$ & PSNR$\uparrow$ & LPIPS$\downarrow$ & \cellcolor{timegray}Time$\downarrow$
  & SSIM$\uparrow$ & PSNR$\uparrow$ & LPIPS$\downarrow$ & \cellcolor{timegray}Time$\downarrow$
  & SSIM$\uparrow$ & PSNR$\uparrow$ & LPIPS$\downarrow$ & \cellcolor{timegray}Time$\downarrow$ \\
\midrule
\multirow{3}{*}{Open} & DiffVG~\cite{Li:2020:DVG} & 0.552 & 19.83 & \underline{0.563} & \cellcolor{timegray}18.9min & 0.587 & 21.47 & \underline{0.537} & \cellcolor{timegray}22.0min & 0.616 & 22.62 & 0.517 & \cellcolor{timegray}30.6min \\
 & Bézier Splatting~\cite{liu2025bezier} & \underline{0.600} & \underline{22.17} & \textbf{0.540} & \cellcolor{timegray}\underline{3.4min} & \underline{0.646} & \underline{23.79} & \textbf{0.498} & \cellcolor{timegray}\underline{3.3min} & \underline{0.699} & \underline{25.45} & \textbf{0.448} & \cellcolor{timegray}\underline{3.2min} \\
 & \method{} (Ours, $S=24$) & \cellcolor{oursblue}\textbf{0.624} & \cellcolor{oursblue}\textbf{23.07} & \cellcolor{oursblue}0.589 & \cellcolor{timegray}\textbf{1.8min} & \cellcolor{oursblue}\textbf{0.662} & \cellcolor{oursblue}\textbf{24.43} & \cellcolor{oursblue}0.548 & \cellcolor{timegray}\textbf{1.8min} & \cellcolor{oursblue}\textbf{0.708} & \cellcolor{oursblue}\textbf{25.93} & \cellcolor{oursblue}\underline{0.502} & \cellcolor{timegray}\textbf{1.8min} \\
\midrule
\multirow{5}{*}{Closed} & DiffVG~\cite{Li:2020:DVG} & 0.578 & 20.69 & 0.548 & \cellcolor{timegray}16.4min & 0.601 & 21.82 & 0.531 & \cellcolor{timegray}18.5min & 0.631 & 22.95 & 0.509 & \cellcolor{timegray}25.1min \\
 & LIVE~\cite{xu2022live} & 0.576 & 20.09 & \underline{0.543} & \cellcolor{timegray}2.6h & 0.611 & 21.70 & \textbf{0.521} & \cellcolor{timegray}4.2h & 0.648 & 23.11 & \textbf{0.495} & \cellcolor{timegray}5.1h \\
 & LIVSS~\cite{livss} & \underline{0.586} & 17.71 & \textbf{0.542} & \cellcolor{timegray}39.2min & \underline{0.630} & 18.71 & 0.530 & \cellcolor{timegray}54.3min & \underline{0.678} & 19.83 & 0.517 & \cellcolor{timegray}1.4h \\
 & Bézier Splatting~\cite{liu2025bezier} & 0.580 & \underline{20.74} & 0.546 & \cellcolor{timegray}\underline{7.8min} & 0.607 & \underline{22.11} & \underline{0.528} & \cellcolor{timegray}\underline{8.3min} & 0.639 & \underline{23.45} & \underline{0.507} & \cellcolor{timegray}\underline{8.6min} \\
 & \method{} (Ours, $S=12$) & \cellcolor{oursblue}\textbf{0.629} & \cellcolor{oursblue}\textbf{23.00} & \cellcolor{oursblue}0.580 & \cellcolor{timegray}\textbf{1.8min} & \cellcolor{oursblue}\textbf{0.664} & \cellcolor{oursblue}\textbf{24.31} & \cellcolor{oursblue}0.547 & \cellcolor{timegray}\textbf{1.9min} & \cellcolor{oursblue}\textbf{0.705} & \cellcolor{oursblue}\textbf{25.78} & \cellcolor{oursblue}0.508 & \cellcolor{timegray}\textbf{2.1min} \\
\bottomrule
\end{tabular}}
\label{tab:comparison}
\end{table}

\begin{table}[t]
    \centering
    \caption{
        A quantitative evaluation on the Kodak dataset~\cite{kodak1999} with 256 to 1024 curves on RTX 4090. CubicSplat leads in PSNR and SSIM across all settings and achieves the best LPIPS in closed mode.
    }
    \resizebox{0.85\textwidth}{!}{
        \begin{tabular}{l|l|ccc|ccc|ccc}
            \toprule
            \multirow{2}{*} & \multirow{2}{*}{\textbf{Method}} 
             & \multicolumn{3}{c|}{\textbf{256}} 
             & \multicolumn{3}{c|}{\textbf{512}} 
             & \multicolumn{3}{c}{\textbf{1024}} 
              \\
             
             & & $SSIM^\uparrow$ & $PSNR^\uparrow$ & $LPIPS^\downarrow$
               & $SSIM^\uparrow$ & $PSNR^\uparrow$ & $LPIPS^\downarrow$
               & $SSIM^\uparrow$ & $PSNR^\uparrow$ & $LPIPS^\downarrow$ \\
            \midrule

            \multirow{3}{*}{Open} 
            & DiffVG & 0.601 & 23.43 & 0.535  
                    & 0.645 & 24.70 & 0.495  
                    & 0.699 & 26.04 & 0.439   \\
            & Bézier Splatting   & 0.679 & 26.18 & \textbf{0.457}
                    & \textbf{0.743} & 27.90 & \textbf{0.383}
                    & 0.797 & 29.24 & \textbf{0.310}  \\
            & CubicSplat(Ours, S=24)   & \cellcolor{oursblue}\textbf{0.681} & \cellcolor{oursblue}\textbf{26.75} & \cellcolor{oursblue}0.498  
                    & \cellcolor{oursblue}0.740 & \cellcolor{oursblue}\textbf{28.28} & \cellcolor{oursblue}0.432 
                    & \cellcolor{oursblue}\textbf{0.803} & \cellcolor{oursblue}\textbf{29.94} & \cellcolor{oursblue}0.355  \\

            \midrule
            \multirow{3}{*}{Closed} 
            & DiffVG & 0.622 & 24.11 & 0.513
                    & 0.666 & 25.34 & 0.475  
                    & 0.719 & 26.66 & 0.420  \\
            & Bézier Splatting 
                    & 0.621 & 24.19 & 0.519 
                    & 0.664 & 25.61 & 0.485 
                    & 0.708 & 26.91 & 0.448  \\
            & CubicSplat(Ours, S=12)
                    & \cellcolor{oursblue}\textbf{0.680} & \cellcolor{oursblue}\textbf{26.64} & \cellcolor{oursblue}\textbf{0.507}
                    & \cellcolor{oursblue}\textbf{0.730} & \cellcolor{oursblue}\textbf{28.06} & \cellcolor{oursblue}\textbf{0.454}
                    & \cellcolor{oursblue}\textbf{0.786} & \cellcolor{oursblue}\textbf{29.66} & \cellcolor{oursblue}\textbf{0.391}  \\
            \bottomrule
        \end{tabular}
    }
    \label{tab:comparisons_kodak}
\end{table}


%% file: tables/ablation.tex
\begin{table*}[!b]
\centering
\caption{Ablation on the uniformly sampled DIV2K subset~\cite{Timofte_2017_CVPR_Workshops} (200 images). We report SSIM/PSNR/LPIPS and training throughput in Tasks Per Minute of a single GPU(TPM, higher is better), measured with 2 workers per GPU. Best results in \textbf{bold}, second-best \underline{underlined}.}
\vspace{-0.7em}
\setlength{\tabcolsep}{3pt}
\resizebox{0.99\textwidth}{!}{
\begin{tabular}{l|l|cccc|cccc|cccc}
\toprule
& \multirow{2}{*}{\textbf{Setting}}
& \multicolumn{4}{c|}{\textbf{256 curves}}
& \multicolumn{4}{c|}{\textbf{512 curves}}
& \multicolumn{4}{c}{\textbf{1024 curves}} \\
& & SSIM$\uparrow$ & PSNR$\uparrow$ & LPIPS$\downarrow$ & \cellcolor{timegray}TPM$\uparrow$
  & SSIM$\uparrow$ & PSNR$\uparrow$ & LPIPS$\downarrow$ & \cellcolor{timegray}TPM$\uparrow$
  & SSIM$\uparrow$ & PSNR$\uparrow$ & LPIPS$\downarrow$ & \cellcolor{timegray}TPM$\uparrow$ \\
\midrule
\multirow{7}{*}{Open} & $S=24$ & \textbf{0.624} & \textbf{23.07} & \textbf{0.589} & \cellcolor{timegray}0.861 & \textbf{0.662} & \textbf{24.43} & \textbf{0.548} & \cellcolor{timegray}0.883 & \textbf{0.708} & \underline{25.93} & \textbf{0.502} & \cellcolor{timegray}0.873 \\
 & \quad \textbf{w.} Regularization & 0.614 & 22.60 & 0.597 & \cellcolor{timegray}\underline{0.894} & 0.650 & 23.89 & 0.561 & \cellcolor{timegray}\underline{0.938} & 0.693 & 25.28 & 0.522 & \cellcolor{timegray}\underline{0.917} \\
 & \quad \textbf{w.o} Gini-Pruning & 0.611 & 22.38 & 0.596 & \cellcolor{timegray}\textbf{0.971} & 0.645 & 23.56 & 0.554 & \cellcolor{timegray}\textbf{0.982} & 0.686 & 24.83 & 0.508 & \cellcolor{timegray}\textbf{0.952} \\
 & $S=32$ & \textbf{0.624} & \underline{23.06} & \textbf{0.589} & \cellcolor{timegray}0.859 & \textbf{0.662} & \underline{24.42} & \textbf{0.548} & \cellcolor{timegray}0.868 & \textbf{0.708} & \textbf{25.94} & \textbf{0.502} & \cellcolor{timegray}0.830 \\
 & $S=12$ & 0.622 & 22.93 & 0.591 & \cellcolor{timegray}0.814 & 0.658 & 24.27 & 0.550 & \cellcolor{timegray}0.794 & 0.704 & 25.75 & 0.504 & \cellcolor{timegray}0.832 \\
 & $S=8$ & 0.615 & 22.43 & 0.595 & \cellcolor{timegray}0.753 & 0.649 & 23.64 & 0.556 & \cellcolor{timegray}0.775 & 0.692 & 24.97 & 0.511 & \cellcolor{timegray}0.832 \\
 & $\text{de Casteljau}$ & 0.623 & 23.01 & 0.590 & \cellcolor{timegray}0.645 & 0.660 & 24.37 & 0.549 & \cellcolor{timegray}0.631 & 0.706 & 25.87 & 0.503 & \cellcolor{timegray}0.633 \\
\midrule
\multirow{7}{*}{Closed} & $S=24$ & \textbf{0.630} & \textbf{23.01} & \textbf{0.580} & \cellcolor{timegray}0.572 & \textbf{0.664} & \textbf{24.33} & \textbf{0.547} & \cellcolor{timegray}0.481 & \textbf{0.706} & \textbf{25.80} & \textbf{0.508} & \cellcolor{timegray}0.413 \\
 & \quad \textbf{w.} Regularization & 0.620 & 22.63 & 0.588 & \cellcolor{timegray}0.650 & 0.653 & 23.95 & 0.558 & \cellcolor{timegray}0.578 & 0.694 & 25.38 & 0.521 & \cellcolor{timegray}0.503 \\
 & \quad \textbf{w.o} Gini-Pruning & 0.616 & 22.21 & 0.589 & \cellcolor{timegray}0.546 & 0.643 & 23.22 & 0.561 & \cellcolor{timegray}0.456 & 0.672 & 24.21 & 0.531 & \cellcolor{timegray}0.390 \\
 & $S=32$ & \textbf{0.630} & \textbf{23.01} & \textbf{0.580} & \cellcolor{timegray}0.461 & \textbf{0.664} & \textbf{24.33} & \textbf{0.547} & \cellcolor{timegray}0.387 & \textbf{0.706} & \textbf{25.80} & \textbf{0.508} & \cellcolor{timegray}0.317 \\
 & $S=12$ & 0.629 & 22.99 & \textbf{0.580} & \cellcolor{timegray}\underline{0.879} & \textbf{0.664} & 24.32 & \textbf{0.547} & \cellcolor{timegray}\underline{0.786} & 0.705 & 25.78 & 0.509 & \cellcolor{timegray}\underline{0.654} \\
 & $S=8$ & 0.628 & 22.95 & 0.581 & \cellcolor{timegray}\textbf{0.975} & 0.662 & 24.25 & 0.548 & \cellcolor{timegray}\textbf{0.914} & 0.704 & 25.70 & 0.510 & \cellcolor{timegray}\textbf{0.826} \\
 & $\text{de Casteljau}$ & 0.629 & 22.97 & 0.581 & \cellcolor{timegray}0.517 & 0.663 & 24.29 & 0.548 & \cellcolor{timegray}0.531 & 0.704 & 25.73 & 0.510 & \cellcolor{timegray}0.481 \\
\bottomrule
\end{tabular}}
\label{tab:div2k_ablation}
\end{table*}

%% file: sec/5_conclusion.tex
\section{Conclusion}

We introduce CubicSplat, a scalable differentiable vector rasterizer that addresses the gradient seesaw through error-bounded forward relaxation. By replacing B\'{e}zier closest-point root solving with closed-form polyline distance queries, CubicSplat yields a stable, static computation graph while maintaining a tight $\mathcal{O}(S^{-2})$ geometric error bound. Experiments on large-scale benchmarks confirmed that this controlled relaxation does not sacrifice reconstruction quality: the better-conditioned gradient oracle consistently produces superior optimization outcomes at a fraction of the computational cost.

These efficiency characteristics suggest two natural extensions. The tile-parallel architecture and lightweight per-primitive cost bring \textbf{real-time differentiable vector rendering within reach}, enabling interactive design tools that optimize on the fly at display rates\cite{das2020beziersketch}. The low per-frame cost also \textbf{opens the door to video vectorization}, where temporal coherence demands both fast rendering and stable gradients under continuous motion. More broadly, the gradient seesaw perspective may offer a useful lens for other differentiable rendering settings where forward exactness is pursued at the expense of optimization quality. Combining CubicSplat with perceptual losses, semantic constraints, and learned priors~\cite{jain2023vectorfusion,xing2024svgdreamer,li20254dlangsplat4dlanguage,Du:2023:IVE,Chen_2024_CVPR,hirschorn2024optimize,jain2023vectorfusion,xing2023diffsketcher,zhang2024text,Cao_2023_CVPR} is a promising direction toward controllable, real-time vector-content creation.

%% file: sec/6_appendix.tex
\section{Formal Analysis of Differentiability and Gradient Structure}%
\label{app:theory}

For convenience we restate the two definitions from the main paper that
are used throughout this supplement.
\paragraph{Soft coverage.}
Each primitive $p$ contributes a per-pixel opacity
\begin{equation}
\label{eq:supp_pipeline}
  \alpha_p(x)
  \;=\;
  \phi\!\left(
    \frac{\delta_p(x,\Pi_S(\mathcal{B}_p))}{\tau_p}
  \right),
\end{equation}
where $\phi$ is a coverage kernel, $\tau_p>0$ is a softness parameter,
and $\delta_p$ is a \emph{signed residual} defined as
\begin{equation}
\label{eq:supp_signed_residual}
\delta_p(x,\, \Pi_S(\mathcal{B}_p)) \;=\;
\begin{cases}
  r_p \;-\; d(x,\,\mathcal{P}_p),
    & \text{open stroke},\\[4pt]
  s_p(x)\;\cdot\; d(x,\,\mathcal{P}_p),
    & \text{closed fill},
\end{cases}
\end{equation}
with $d(x,\mathcal{P}_p)$ the unsigned distance from pixel $x$ to the
path $\mathcal{P}_p$, $r_p$ the half-stroke width, and $s_p(x)\in\{+1,-1\}$
the inside/outside sign determined by the winding rule.
The final raster image is obtained by compositing
$R_S(\theta)=\mathcal{C}\!\bigl(\{\alpha_p, c_p\}_{p=1}^{N}\bigr)$
over all $N$ primitives in depth order.

\begin{remark}[One-sided anti-aliasing in fill mode]
\label{rem:onesided_aa}
For closed fills, the signed residual $\delta_p(x)=s_p(x)\cdot
d(x,\mathcal{P}_p)$ is positive inside and negative outside
(\cref{eq:supp_signed_residual}). Because $\phi$ saturates to $1$ for large
positive arguments, interior pixels far from the boundary receive
$\alpha_p\approx 1$ irrespective of $d$; the smooth $1\to 0$ transition
is carried entirely by pixels near or outside the boundary, where
$\delta_p\approx 0$ or $\delta_p<0$. This one-sided convention avoids
double-counting coverage at shared edges and is consistent with standard
winding-number rasterizers. All experiments use the nonzero winding rule.
\end{remark}

\subsection{Preliminaries}

A single B\'ezier segment of degree $M$ is defined as:
\begin{equation}
\mathcal{B}_i(t)
= \sum_{j=0}^{M} B_j^M(t)\,P^{(i)}_j,
\qquad t\in[0,1],\; i\in\{1,\dots,N\},
\label{eq:bezier}
\end{equation}
with control points $P^{(i)}_j\in\mathbb{R}^2$ and Bernstein basis:
\begin{equation}
B_j^M(t)=\binom{M}{j}(1-t)^{M-j}\cdot t^j.
\label{eq:bezier1}
\end{equation}
Each primitive additionally carries appearance parameters: RGB color
$c_p\in\mathbb{R}^3$, opacity $o_p\in[0,1]$, and coverage kernel
parameter $\tau_p$. Strokes also have a learnable half-width $w_i>0$.
We collect all learnable parameters in $\theta\in\Theta$.

\begin{definition}[Lipschitz continuity]
A function $f:\mathbb{R}^m\to\mathbb{R}^n$ is $L$-Lipschitz if
$\|f(u)-f(v)\|_2\le L\|u-v\|_2$ for all $u,v\in\mathbb{R}^m$.
It is \emph{locally Lipschitz} if every point has a neighborhood on
which this holds for some finite $L$.
\end{definition}

\begin{theorem}[Rademacher~\cite{rademacher1919}]
Every locally Lipschitz function $f:\mathbb{R}^m\to\mathbb{R}^n$ is
differentiable almost everywhere with respect to Lebesgue measure.
\end{theorem}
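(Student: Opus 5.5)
The plan is the classical three-stage argument: reduce to a globally Lipschitz scalar function, show that directional derivatives exist almost everywhere and depend linearly on the direction, and then use the Lipschitz bound to upgrade directional differentiability along a countable dense set of directions to full Fréchet differentiability.

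\emph{Reductions.} A vector-valued $f$ is differentiable at $x$ iff each component is, and a countable union of null sets is null. So it suffices to take $n=1$. Cover $\mathbb{R}^m$ by countably many open balls on which $f$ is $L_k$-Lipschitz. On each ball, replace $f$ by its McShane extension $\tilde f(x)=\inf_y\{f(y)+L_k\|x-y\|\}$, which is globally $L_k$-Lipschitz and agrees with $f$ on the ball. Differentiability is local, so we may assume $f$ is globally $L$-Lipschitz and real-valued.

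\emph{Directional derivatives.} Fix a unit vector $v$. For every line $\{x+tv\}$, the map $t\mapsto f(x+tv)$ is Lipschitz, hence absolutely continuous. By Lebesgue's differentiation theorem it is differentiable for a.e.\ $t$. The set where $D_vf(x)=\lim_{t\to0}(f(x+tv)-f(x))/t$ fails to exist is Borel, being expressible through $\limsup$ and $\liminf$ of continuous difference quotients over rational $t$. Fubini then shows this set is Lebesgue-null, and $|D_vf|\le L$. Applying this to $v=e_1,\dots,e_m$ defines $\nabla f$ a.e.

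Next I show $D_vf=v\cdot\nabla f$ a.e. Take $\zeta\in C_c^\infty$. The difference quotients are bounded by $L$, so dominated convergence applies, and a change of variables gives
\begin{equation*}
\int \frac{f(x+tv)-f(x)}{t}\,\zeta(x)\,dx=-\int f(x)\,\frac{\zeta(x)-\zeta(x-tv)}{t}\,dx.
\end{equation*}
Letting $t\to0$ yields $\int D_vf\,\zeta=-\int f\,(v\cdot\nabla\zeta)$. Expanding $v\cdot\nabla\zeta=\sum_i v_i\,\partial_i\zeta$ and applying the same identity to each $e_i$ gives $\int D_vf\,\zeta=\int (v\cdot\nabla f)\,\zeta$. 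Since $\zeta$ is arbitrary, $D_vf=v\cdot\nabla f$ a.e. Now fix a countable dense set $\{v_k\}$ in the unit sphere. There is a single null set $N$ outside of which, for every $k$, $D_{v_k}f(x)$ exists and equals $v_k\cdot\nabla f(x)$.

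\emph{Upgrade to full differentiability (main obstacle).} Fix $x\notin N$ and set
\begin{equation*}
Q(v,t)=\frac{f(x+tv)-f(x)}{t}-v\cdot\nabla f(x).
\end{equation*}
For unit vectors $v,v'$ we have $|Q(v,t)-Q(v',t)|\le(L+\|\nabla f(x)\|)\,\|v-v'\|\le(L+\sqrt m L)\,\|v-v'\|$. So the family $Q(\cdot,t)$ is equi-Lipschitz in $v$, uniformly in $t$. Given $\varepsilon>0$, compactness of the sphere and density of $\{v_k\}$ give finitely many $v_1,\dots,v_K$ forming an $\varepsilon/(2(L+\sqrt mL))$-net. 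Choose $\delta$ so that $|Q(v_k,t)|<\varepsilon/2$ for all $k\le K$ and $0<|t|<\delta$. Then $|Q(v,t)|<\varepsilon$ for every unit $v$. Writing $y=x+tv$, this is exactly $|f(y)-f(x)-\nabla f(x)\cdot(y-x)|\le\varepsilon\|y-x\|$ for $\|y-x\|<\delta$, i.e.\ Fréchet differentiability at $x$.

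The delicate point is that one must pass from countably many directions to all directions uniformly in $t$. The argument works only because the error $Q(\cdot,t)$ is Lipschitz in the direction with a constant independent of $t$; that is where the global Lipschitz hypothesis is used essentially.
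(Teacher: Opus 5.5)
Your proof is correct. It is the standard textbook argument, as in Evans--Gariepy:
\begin{enumerate}
\item reduce to a globally Lipschitz scalar function;
\item get directional derivatives a.e.\ from Lebesgue's one-dimensional theorem plus Fubini;
\item identify $D_v f = v\cdot\nabla f$ a.e.\ by testing against $\zeta\in C_c^\infty$;
\item pass from a countable dense set of directions to all directions, using that $Q(\cdot,t)$ is Lipschitz in $v$ with a constant independent of $t$.
\end{enumerate}

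The paper does not prove the theorem at all. It quotes Rademacher's theorem as a classical result with a citation and uses it as a black box in Proposition~\ref{prop:ae_diff}. There it is applied to the surrogate distance $\tilde d(x;\theta)$, viewed as a locally Lipschitz function of the finite-dimensional parameter $\theta$. So the comparison is only between citing and proving. Citing keeps the supplement short and is appropriate for a result of this standing. Your version makes the appendix self-contained at the cost of about a page.

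Two small points to tighten if you write it out:
\begin{itemize}
\item In the McShane extension, the infimum must range only over $y$ in the ball $B_k$. Taken over all of $\mathbb{R}^m$, it need not agree with $f$ on $B_k$ and can even equal $-\infty$.
\item The final upgrade step uses the Lipschitz bound only in a neighbourhood of $x$. So it is local Lipschitzness that is ``used essentially'' there. The globalization is merely a convenience for the Fubini and test-function steps.
\end{itemize}
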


\begin{theorem}[Danskin--envelope theorem for a minimum;
{\cite{danskin1967,bertsekas1971}}]
\label{thm:danskin}
Let $g:\mathcal{T}\times\Theta\to\mathbb{R}$ be continuously
differentiable in $\theta$ for each $t\in\mathcal{T}$, and define
$\phi(\theta)=\min_{t\in\mathcal{T}}g(t,\theta)$.
\begin{enumerate}
  \item \emph{(Unique minimizer.)} If the minimizer
        $t^\star(\theta)\in\mathcal{T}$ is unique and lies in the
        interior of $\mathcal{T}$, then $\phi$ is differentiable at
        $\theta$ and
        $\nabla_\theta\phi(\theta)=\nabla_\theta g(t^\star(\theta),\theta)$.
  \item \emph{(Multiple minimizers.)} If
        $\mathcal{T}^\star(\theta)=\arg\min_{t\in\mathcal{T}}g(t,\theta)$
        contains more than one element, then $\phi$ is generally
        nonsmooth; its Clarke subdifferential~\cite{clarke1990} satisfies
        $\partial^C\phi(\theta)\subseteq
        \operatorname{conv}\{\nabla_\theta g(t,\theta):
        t\in\mathcal{T}^\star(\theta)\}$.
\end{enumerate}
\end{theorem}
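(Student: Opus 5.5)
The proof will add the standing hypotheses that the statement leaves implicit but needs: $\mathcal{T}$ is compact, and both $g$ and $\nabla_\theta g$ are jointly continuous on $\mathcal{T}\times\Theta$. These hold in our use, where $t$ ranges over segment indices and parameters in $[0,1]$. Under them, $\phi$ is well defined and continuous, and the argmin map $\theta\mapsto\mathcal{T}^\star(\theta)$ is outer semicontinuous. The last fact is the one used repeatedly: if $\theta_k\to\theta$ and $t_k\in\mathcal{T}^\star(\theta_k)$, then every cluster point of $(t_k)$ lies in $\mathcal{T}^\star(\theta)$. It follows by passing to a convergent subsequence and taking limits in $g(t_k,\theta_k)\le g(t,\theta_k)$.

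For part 1, fix a direction $h$ and step $\epsilon>0$, and prove matching upper and lower bounds for the difference quotient.
\begin{itemize}
\item \emph{Upper bound.} By minimality, $\phi(\theta+\epsilon h)-\phi(\theta)\le g(t^\star,\theta+\epsilon h)-g(t^\star,\theta)$. The mean value theorem then gives $\limsup_{\epsilon\to 0}\epsilon^{-1}[\phi(\theta+\epsilon h)-\phi(\theta)]\le\nabla_\theta g(t^\star,\theta)\cdot h$.
\item \emph{Lower bound.} Choose $t_\epsilon\in\mathcal{T}^\star(\theta+\epsilon h)$. Then $\phi(\theta+\epsilon h)-\phi(\theta)\ge g(t_\epsilon,\theta+\epsilon h)-g(t_\epsilon,\theta)=\epsilon\,\nabla_\theta g(t_\epsilon,\xi_\epsilon)\cdot h$ for some $\xi_\epsilon$ on the segment between $\theta$ and $\theta+\epsilon h$. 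Uniqueness of the minimizer together with outer semicontinuity forces $t_\epsilon\to t^\star$. Joint continuity of $\nabla_\theta g$ then gives the matching $\liminf$.
\end{itemize}
The two bounds together give $\phi'(\theta;h)=\nabla_\theta g(t^\star,\theta)\cdot h$ for every direction $h$. To upgrade this to Fréchet differentiability, I will run the same two-sided estimate with $\|h\|\to 0$ uniformly over directions; compactness of $\mathcal{T}$ makes the error terms uniform. Interiority of $t^\star$ is not needed in this argument and is only a convenience.

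For part 2, I first show $\phi$ is locally Lipschitz. On a compact neighbourhood of $\theta$, $\sup_t\|\nabla_\theta g(t,\cdot)\|$ is finite, and a minimum of uniformly Lipschitz functions is Lipschitz with the same constant. By Rademacher's theorem, Clarke's characterization then applies: $\partial^C\phi(\theta)=\operatorname{conv}\{\lim\nabla\phi(\theta_k):\theta_k\to\theta,\ \theta_k\in D\}$, where $D$ is the set where $\phi$ is differentiable, and one may exclude any null set from $D$.

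It therefore suffices to show that every such limit lies in $G:=\{\nabla_\theta g(t,\theta):t\in\mathcal{T}^\star(\theta)\}$. At a differentiability point $\theta_k$, pick any $t_k\in\mathcal{T}^\star(\theta_k)$. The function $\psi:=g(t_k,\cdot)-\phi$ is nonnegative near $\theta_k$ and vanishes at $\theta_k$, so $\theta_k$ is a local minimum of $\psi$. Since $\psi$ is differentiable there, its gradient vanishes, which gives $\nabla\phi(\theta_k)=\nabla_\theta g(t_k,\theta_k)$. Passing to a subsequence, $t_k\to\bar t\in\mathcal{T}^\star(\theta)$ by outer semicontinuity, and continuity of $\nabla_\theta g$ gives $\lim\nabla\phi(\theta_k)=\nabla_\theta g(\bar t,\theta)\in G$. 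Taking convex hulls yields the claimed inclusion.

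The main obstacle is this last step. It needs the local-minimum identity at differentiability points, together with the fact that the minimizer selection $t_k$ may jump while still clustering inside $\mathcal{T}^\star(\theta)$. Making sure compactness and joint continuity are genuinely available is where the hidden hypotheses matter; without them the inclusion can fail.
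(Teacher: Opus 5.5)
The paper does not prove this statement. It imports it as a classical result, citing Danskin and Bertsekas (and Clarke for part~2). Your argument is therefore a self-contained substitute rather than a reconstruction of anything in the text.

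It is correct under the hypotheses you add: $\mathcal{T}$ compact, $g$ and $\nabla_\theta g$ jointly continuous, and implicitly $\Theta$ an open subset of $\mathbb{R}^n$. You are right that some such hypotheses are indispensable, and right that interiority of $t^\star$ plays no role. These hypotheses hold in the paper's uses: $\mathcal{T}=[0,1]$ with squared distance to a cubic, or a finite index set times $[0,1]$ for the polyline.

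The cited route has two pieces.
\begin{itemize}
\item Part~1 follows at once from Danskin's directional-derivative formula $\phi'(\theta;h)=\min_{t\in\mathcal{T}^\star(\theta)}\nabla_\theta g(t,\theta)^\top h$.
\item Part~2 comes from Clarke's theorem on pointwise maxima of compactly indexed families, applied to $-\phi$. That theorem is proved via the generalized directional derivative and holds in Banach spaces.
\end{itemize}

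Your part~2 combines Rademacher's theorem, Clarke's gradient formula, and the observation that at a differentiability point every active gradient equals $\nabla\phi$. The last step works because $g(t_k,\cdot)-\phi\ge 0$ vanishes at $\theta_k$. This route is more elementary and makes the mechanism very visible. The price is that it is tied to finite dimensions, which is harmless here.

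One simplification: your part~2 already contains part~1. When $\mathcal{T}^\star(\theta)=\{t^\star\}$, your inclusion makes $\partial^C\phi(\theta)$ a singleton. A locally Lipschitz function whose Clarke subdifferential is a singleton is strictly, hence Fr\'echet, differentiable there. So you can drop the separate Fr\'echet upgrade, whose stated justification (``compactness makes the error terms uniform'') is loose. The real reason it works is that $t_h\to t^\star$ for every $h\to 0$, by the same outer-semicontinuity argument you already use.
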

\subsection{Ill-Conditioning of the DiffVG Analytical Prefilter}

\subsubsection{Closest-point distance as a value function.}

Fix a pixel center $x\in\Omega$ and let $b(t;\theta):[0,1]\to\mathbb{R}^2$
be a cubic B\'ezier curve with control points $\theta$. Define the squared
distance
\begin{equation}
  g(t,\theta)
  \;=\;
  \|b(t;\theta)-x\|_2^2,
  \qquad
  \phi(\theta)
  \;=\;
  \min_{t\in[0,1]} g(t,\theta).
  \label{eq:diffvg_value}
\end{equation}
When the minimizer $t^\star(\theta)$ is unique and interior to $[0,1]$,
\cref{thm:danskin}(1) gives
\begin{equation}
  \nabla_\theta\phi(\theta)
  \;=\;
  \nabla_\theta g\!\bigl(t^\star(\theta),\theta\bigr)
  \;=\;
  2\bigl(b(t^\star;\theta)-x\bigr)^\top\!\nabla_\theta b(t^\star;\theta).
  \label{eq:envelope_grad}
\end{equation}
The signed distance
$d_{\mathrm{signed}}(\theta)=\pm\sqrt{\phi(\theta)}$, where the sign is
determined by the winding number as in DiffVG, is then differentiable at
all $\theta$ with $\phi(\theta)>0$ by the chain rule.

\subsubsection{Intrinsic nonsmoothness under multiple closest points.}

When the minimizer set $\mathcal{T}^\star(\theta)$ contains more than one
point, $\phi$ is generally not differentiable. By \cref{thm:danskin}(2),
the Clarke subdifferential satisfies
\begin{equation}
  \partial^C\phi(\theta)
  \;\subseteq\;
  \operatorname{conv}\!\Bigl\{
    2\bigl(b(t;\theta)-x\bigr)^\top\!\nabla_\theta b(t;\theta)
    \;:\;
    t\in\mathcal{T}^\star(\theta)
  \Bigr\}.
  \label{eq:clarke_value}
\end{equation}
Non-unique closest points arise when a pixel is equidistant to two or
more curve locations, a situation that occurs near medial axes, cusps,
self-intersections, and near-parallel boundaries. Such configurations are
unavoidable in practice and explain why analytical prefiltering exhibits
gradient conflation and optimization instability without additional
regularization.

\subsubsection{Gradient error under approximate root solving.}
\label{app:sub:diffvg_rooterror}

Finding $t^\star(\theta)$ requires solving
$\partial_t g(t,\theta)=0$, a degree-5 polynomial in
$t$~\cite{Li:2020:DVG}. In practice, only an approximate stationary
point $\hat{t}(\theta)$ is obtained
numerically~\cite{Li:2020:DVG}. The surrogate value is
$\hat\phi(\theta)=g(\hat{t}(\theta),\theta)$, whose total derivative is
\begin{equation}
  \nabla_\theta\hat\phi(\theta)
  \;=\;
  \underbrace{\nabla_\theta g(\hat{t},\theta)}_{\approx\,\nabla_\theta\phi(\theta)}
  \;+\;
  \underbrace{\partial_t g(\hat{t},\theta)\,\nabla_\theta\hat{t}(\theta)}_{\text{residual}}.
  \label{eq:approx_total}
\end{equation}

Many implementations treat $\hat t$ as a non-differentiable argmin and apply a stop-gradient through the numerical solver, retaining only the first term $\nabla_\theta g(\hat t,\theta)$; the resulting error then depends on the stationarity defect $|\partial_t g(\hat t,\theta)|$.

At a true stationary point $\partial_t g(t^\star,\theta)=0$, the residual vanishes and \cref{eq:envelope_grad} is recovered. In floating-point arithmetic, however, $\partial_t g(\hat{t},\theta)\ne 0$ in general, introducing a gradient error bounded by
\begin{equation}
  \bigl\|\nabla_\theta\hat\phi(\theta)-\nabla_\theta\phi(\theta)\bigr\|
  \;\lesssim\;
  |\partial_t g(\hat{t},\theta)|\;\|\nabla_\theta\hat{t}(\theta)\|
  \;+\;
  \text{(quadrature error in }\nabla_\theta g\text{)}.
  \label{eq:grad_error_bound}
\end{equation}
When $\hat{t}$ is obtained by differentiating through the implicit root
condition $p(t,\theta)\equiv\partial_t g(t,\theta)=0$ via the implicit
function theorem,
\begin{equation}
  \nabla_\theta\hat{t}(\theta)
  \;=\;
  -\frac{\nabla_\theta p(\hat{t},\theta)}{\partial_t p(\hat{t},\theta)}.
  \label{eq:ift_root}
\end{equation}
This expression becomes ill-conditioned whenever
$\partial_t p(\hat{t},\theta)\approx 0$, i.e.\ when roots are nearly multiple or the distance landscape is nearly flat (e.g.\ a pixel roughly equidistant to two curve segments). In such configurations the residual term in \cref{eq:approx_total} dominates, producing large gradient noise that directly drives optimization instability. This provides a formal account of why DiffVG requires either additional regularization near geometric degeneracies or a fallback to its
Monte Carlo mode at substantially higher computational cost.
\subsection{Discretization Artifacts in Point-Sampled Curve Splatting}
\label{app:sub:bgs}

This subsection formalizes a key limitation of point-sampled splatting
approaches such as B\'ezier Splatting~\cite{liu2025bezier}, which places
Gaussian primitives at discrete samples along a B\'ezier curve. The core
issue is not merely approximation error in a norm, but the introduction
of phase-sensitive aliasing that breaks scale equivariance and makes both
forward appearance and backward gradients sensitive to sampling
schedules.

\begin{figure}[h]
    \centering
    \includegraphics[width=0.8\linewidth]{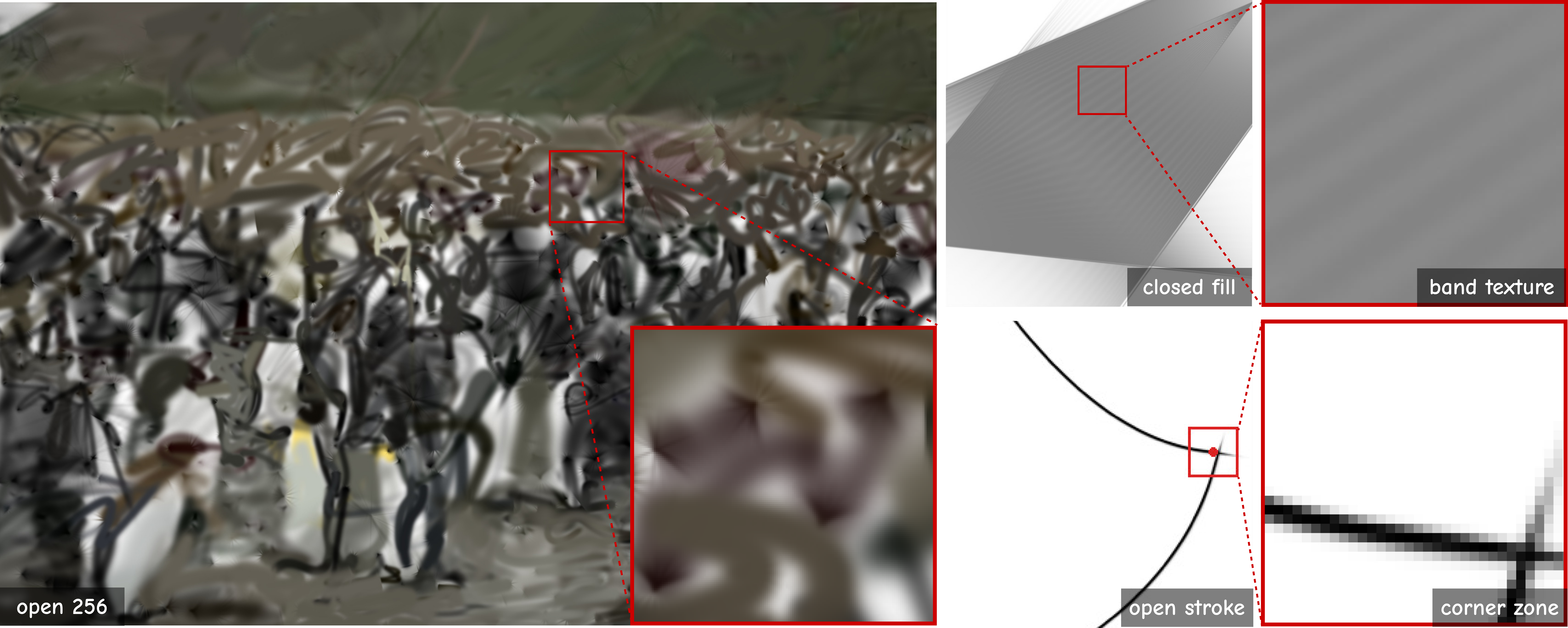}
    \caption{\textbf{Single-primitive artifacts of B\'ezier Splatting.}
We isolate failure modes that persist even without multi-primitive
occlusion or scheduling effects.
\textit{Left:} A typical open-stroke reconstruction (256 curves) exhibits
texture-like blotches arising from sample-based image formation rather
than geometric structure.
\textit{Right (top):} A single closed fill reveals coherent diagonal
banding inside the primitive (zoomed inset), showing that the discretized
Gaussian sum induces structured micro-texture even in the simplest
setting.
\textit{Right (bottom):} A single open stroke shows a pronounced
spike near the endpoint curvature transition (zoomed inset), where local
sampling density and compositing amplify spurious intensity.
These artifacts are intrinsic to discrete splatting and vary with
rescaling or sampling schedules.}
    \label{fig:bgs-onion}
\end{figure}

\subsubsection{Continuous Line Integral vs.\ Discrete Splat Sum.}

Consider a single curve locus $b(t;\theta)$ and an isotropic smooth
kernel $\kappa:\mathbb{R}^2\to\mathbb{R}_+$ (e.g.\ a Gaussian). A
natural continuous coverage field is the line integral
\begin{equation}
  \alpha_{\mathrm{cont}}(x;\theta)
  \;\coloneqq\;
  \int_{0}^{1} \kappa\!\bigl(x-b(t;\theta)\bigr)\,dt.
  \label{eq:cont_integral}
\end{equation}
Point-sampled splatting replaces this with a Riemann sum at step
$h=1/S$:
\begin{equation}
  \alpha_{h}(x;\theta)
  \;\coloneqq\;
  h\sum_{n=0}^{S-1} \kappa\!\bigl(x-b(nh;\theta)\bigr).
  \label{eq:splat_sum}
\end{equation}
While $\alpha_h \to \alpha_{\mathrm{cont}}$ as $h\to 0$, the
discretization introduces a \emph{sampling comb} that yields structured,
phase-dependent high-frequency artifacts at any finite $h$.

\subsubsection{Comb Spectrum and Beat Textures.}

To expose the mechanism, consider a local arc-length parameterization $s$
near a point on the curve, so that $b(t)$ is locally approximated by
$b(s)=b_0+s\,u$ with unit tangent $u$. The continuous model integrates
$\kappa$ over $s$, whereas the discrete model samples it at
$s_n = n\Delta$ with spacing $\Delta$ proportional to $h$ (up to local
speed $\|b'(t)\|$). In this 1D surrogate the discrete sampling multiplies
the continuous integrand by a Dirac comb, whose Fourier transform is
again a comb. The spectrum of $\alpha_h$ therefore contains aliased
replicas whose amplitudes depend on $\Delta$ and whose \emph{phase}
depends on the offset of sample locations relative to the pixel grid.

\begin{proposition}[Phase-sensitive aliasing]
\label{prop:phase_alias}
Assume $\kappa$ is smooth and not strictly band-limited (e.g.\ a
Gaussian), and consider a region where the curve is locally
well-approximated by a straight segment. Then
$\alpha_{h}(x;\theta)-\alpha_{\mathrm{cont}}(x;\theta)$ contains
oscillatory components whose spatial frequency is set by the sampling
spacing $\Delta$ and whose amplitude depends on the kernel bandwidth.
The sign and magnitude of these components further depend on the
sampling phase, i.e.\ the sub-pixel offset of the sample lattice
relative to the image grid.
\end{proposition}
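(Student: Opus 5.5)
The natural tool is the Poisson summation formula applied to the one-dimensional surrogate set up just before the statement. In the locally straight region write $b(s)=b_0+s\,u$ with unit tangent $u$ and unit normal $n$. Decompose $x-b_0=\xi u+\eta n$. With samples $s_n=n\Delta+\sigma$, where $\sigma\in[0,\Delta)$ is the sampling phase fixed by where the lattice sits relative to the pixel grid, the splat sum becomes
\begin{equation*}
\alpha_h(x)\approx \Delta\sum_{n}\kappa\bigl((\xi-n\Delta-\sigma)u+\eta n\bigr).
\end{equation*}
Here I absorb the Jacobian between $h$ and $\Delta$ into the local speed $\|b'(t)\|$. Likewise $\alpha_{\mathrm{cont}}(x)\approx\int\kappa\bigl((\xi-s)u+\eta n\bigr)\,ds$. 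For fixed $\eta$, put $f_\eta(s)=\kappa(su+\eta n)$, which is smooth and rapidly decaying for a Gaussian. Poisson summation then gives
\begin{equation*}
\alpha_h(x)-\alpha_{\mathrm{cont}}(x)=\sum_{k\neq 0}\hat f_\eta\!\left(\tfrac{k}{\Delta}\right)e^{2\pi i k(\xi-\sigma)/\Delta}.
\end{equation*}
This is the $k=0$ term removed from the comb spectrum described in the text.

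From this identity the three claims follow in order. First, each term is a harmonic in the tangential coordinate $\xi$ with spatial frequency $k/\Delta$, so the oscillation frequency is set by $\Delta$. Second, the amplitudes are $|\hat f_\eta(k/\Delta)|$. For an isotropic Gaussian of bandwidth $\beta$, this is explicitly $\propto e^{-\eta^2/(2\beta^2)}\,e^{-2\pi^2\beta^2k^2/\Delta^2}$. That expression is nonzero for every $k$ precisely because $\kappa$ is not band-limited, and its size is governed by the ratio $\beta/\Delta$, i.e.\ by the kernel bandwidth. The dominant $k=\pm1$ pair gives a nonvanishing cosine $2|\hat f_\eta(1/\Delta)|\cos\bigl(2\pi(\xi-\sigma)/\Delta\bigr)$. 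Third, at any fixed pixel center $x$, the sign and magnitude of this term depend on $\sigma$ through the phase factor. Shifting the lattice by a subpixel amount therefore changes both, which is the phase sensitivity.

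The main obstacle is justifying the reduction to the straight-line surrogate. The actual samples are uniform in $t$, not arclength, and the curve is only approximately straight. I would handle this with a Taylor expansion. Over a window of width $O(\beta)$ around the nearest point, curvature and speed variation perturb the sample positions by $O(\kappa_c\beta^2+|\,\|b'\|'\,|\,h\beta)$. Since $\kappa$ is smooth, this perturbs $\alpha_h-\alpha_{\mathrm{cont}}$ by a correspondingly small term. The truncation of the sum to finitely many samples (endpoints) costs only Gaussian tails away from curve ends. It then remains to show these perturbations are dominated by the $k=\pm1$ harmonic in the regime $\beta\lesssim\Delta$ where the proposition has content. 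The statement is qualitative, so I would state it as: the leading error is the Poisson harmonic series above, plus a remainder controlled by local curvature.
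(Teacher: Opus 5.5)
Your proof is correct and follows the paper's own argument. The paper's justification is a sketch: sampling multiplies the integrand by a Dirac comb, whose transform is again a comb. Your Poisson-summation identity makes that explicit, together with the Gaussian amplitudes $\propto e^{-2\pi^2\beta^2k^2/\Delta^2}$ and the phase factor $e^{2\pi i k(\xi-\sigma)/\Delta}$ that the paper leaves implicit.

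The remainder analysis you flag as the main obstacle is not needed for the statement as posed, because local straightness is one of its hypotheses. If you want it anyway, apply Poisson summation directly in $t$ to $F(t)=\kappa(x-b(t;\theta))$. This gives $\alpha_h-\alpha_{\mathrm{cont}}=\sum_{k\neq 0}\hat F(k/h)$ exactly for curved arcs, up to endpoint terms that are negligible away from the curve ends.
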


\Cref{prop:phase_alias} explains the empirically observed latent
textures: even when the deviation is low-amplitude, it is structured
(banding, beating) and varies non-smoothly with $\Delta$ or sample phase.
This is qualitatively different from the geometric
$\mathcal{O}(S^{-2})$ Hausdorff error of our polyline surrogate
(\cref{sec:errbound}), which does not introduce a sampling comb into the
image formation model.

\subsubsection{Failure of Scale Equivariance}

A particularly damaging consequence is the violation of scale
equivariance. Let $T_s$ denote image scaling by factor $s>0$ (mapping
$x\mapsto sx$) and $S_s$ the corresponding scaling of curve geometry. In
an ideal continuous model, rendering commutes with scaling up to trivial
resampling:
\begin{equation}
  \alpha_{\mathrm{cont}}(sx;\, S_s\theta)
  \;\approx\;
  \alpha_{\mathrm{cont}}(x;\theta).
  \label{eq:scale_ideal}
\end{equation}
The discrete splat model does \emph{not} commute for fixed $h$:
\begin{equation}
  \alpha_h(sx;\, S_s\theta)
  \;\not\approx\;
  \alpha_h(x;\theta),
  \label{eq:scale_break}
\end{equation}
because scaling changes the effective spacing $\Delta$ in image space
while the comb structure remains tied to the discrete sample set. Any
heuristic densification scheme that adjusts $S$ across resolutions or
iterations therefore induces resolution- and schedule-dependent
artifacts, consistent with the micro-textures observed under
magnification.

\begin{corollary}[Sampling-schedule dependence]
\label{cor:schedule}
For point-sampled splatting, both the forward image and the induced loss
surface change under (i)~resolution changes, (ii)~changes in $S$, or
(iii)~redistribution of samples along the curve, even when the
underlying continuous geometry is unchanged. Optimization thus becomes
sensitive to discretization schedules.
\end{corollary}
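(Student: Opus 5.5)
The plan is to reduce Corollary~\ref{cor:schedule} to the non-commutation statement \cref{eq:scale_break} and Proposition~\ref{prop:phase_alias}. We treat the three perturbations (i)--(iii) one at a time, and in each case exhibit a nonzero change in $\alpha_h$ while the continuous field $\alpha_{\mathrm{cont}}$ stays the same.

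The main tool is an exact Poisson-summation identity in the local straight-segment model. Write $b(s)=b_0+s\,u$ and sample at $s_n=\sigma+n\Delta$, where $\sigma$ is the sampling phase. Define $f_x(s)=\kappa(x-b_0-su)$. Poisson summation then gives
\begin{equation}
  \Delta\sum_n f_x(\sigma+n\Delta)
  \;=\;
  \sum_{k\in\mathbb{Z}}\hat f_x\!\left(\tfrac{k}{\Delta}\right)e^{2\pi i k\sigma/\Delta},
\end{equation}
and the $k=0$ term is exactly $\alpha_{\mathrm{cont}}$. So $\alpha_h-\alpha_{\mathrm{cont}}$ is the sum over $k\neq0$. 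Because $\kappa$ is not band-limited, $\hat f_x(1/\Delta)\neq0$; for a Gaussian it is explicitly nonzero. Hence the error is a nonconstant periodic function of $\sigma$ with period $\Delta$.

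\textbf{Case (iii), redistribution of samples.} Shifting the samples changes $\sigma$ and leaves $\alpha_{\mathrm{cont}}$ unchanged. By the identity, the first harmonic of the error changes whenever the shift is not a multiple of $\Delta$.

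\textbf{Case (ii), changing $S$.} Changing $S$ changes $\Delta\propto 1/S$. Both the frequency $1/\Delta$ and the amplitude $|\hat f_x(1/\Delta)|$ then change, while the $k=0$ term does not. Generically the error therefore differs. To make this rigorous it suffices to pick one pair $S\neq S'$ and one pixel $x$ where the two first harmonics differ.

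\textbf{Case (i), resolution change.} Apply \cref{eq:scale_break}: scaling by $s$ with $h$ fixed rescales $\Delta\mapsto s\Delta$ relative to the kernel width, which stays fixed in pixel units. The ratio of spacing to bandwidth changes, so $|\hat f(k/(s\Delta))|$ changes. At the same time $\alpha_{\mathrm{cont}}$ is invariant up to resampling, by \cref{eq:scale_ideal}.

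Finally, we lift these forward-image statements to the loss surface. Take $\mathcal{L}(\theta)=\sum_x\ell(\alpha_h(x;\theta),I^*(x))$ with $\ell$ strictly convex; for compositing with fixed other layers, the map $\alpha\mapsto C$ is affine and injective in $\alpha$ when $T_{p-1}c_p\neq$ background. A perturbation $\alpha_h\mapsto\alpha_h+\varepsilon$ with $\varepsilon\not\equiv0$ then changes $\mathcal{L}$ for some target $I^*$. The gradient $\nabla_\theta\mathcal{L}$ also changes, because $\partial_\theta\alpha_h$ inherits the same oscillating harmonics: differentiating the Poisson identity in $\theta$ through $b_0$ and $u$ gives terms proportional to $\hat f_x'(k/\Delta)$.

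The main obstacle is the ``generically'' in cases (i) and (ii). Different $\Delta$ could in principle give coincidentally equal total errors at every pixel. To rule this out I would evaluate a single pixel on the curve's normal line and argue via analyticity in $\sigma$ and $\Delta$. For the Gaussian, the first harmonic's amplitude is $\propto e^{-2\pi^2\beta^2/\Delta^2}$, where $\beta$ is the kernel width, so it is strictly monotone in $\Delta$ and dominates the higher harmonics. It therefore cannot coincide across distinct $\Delta$ for all phases. Curvature corrections are $O(\Delta^2)$ and can be absorbed by restricting to the locally straight regime that Proposition~\ref{prop:phase_alias} already assumes.
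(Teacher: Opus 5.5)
Your proposal is correct at the paper's own level of rigor and follows essentially the same route. The paper gives no separate proof; it reads the corollary off the comb-spectrum mechanism of Proposition~\ref{prop:phase_alias} together with the non-commutation \cref{eq:scale_break}, and your Poisson-summation identity, explicit Gaussian harmonic and lift from $\alpha_h$ to the loss are a more explicit version of that Dirac-comb argument.

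The one step to tighten is case~(i). If the kernel is fixed in pixel units, \cref{eq:scale_ideal} fails literally for $\alpha_{\mathrm{cont}}$ as well; if the kernel instead scales with the geometry, $\alpha_h$ commutes exactly. So rather than invoking invariance of $\alpha_{\mathrm{cont}}$, phrase the conclusion through the normalized residual: its first-harmonic ratio $|\hat f_x(1/(s\Delta))/\hat f_x(0)| = e^{-2\pi^2\beta^2/(s\Delta)^2}$ is strictly increasing in $s$.
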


\subsubsection{Gradient Implications}

The same comb mechanism affects gradients. Differentiating
\cref{eq:splat_sum} with respect to a control point in $\theta$ gives
\begin{equation}
  \nabla_\theta \alpha_h(x;\theta)
  \;=\;
  -h\sum_{n=0}^{S-1}
  \Bigl[\nabla \kappa\!\bigl(x-b(nh;\theta)\bigr)\Bigr]\,
  \nabla_\theta b(nh;\theta).
  \label{eq:splat_grad}
\end{equation}
Because the sample locations $\{b(nh;\theta)\}$ move with $\theta$, the
gradient aggregates high-frequency contributions of $\nabla\kappa$
evaluated at a discrete set whose phase relative to pixel centers shifts
during optimization. Even when the curve geometry evolves smoothly,
$\nabla_\theta \alpha_h$ can therefore exhibit oscillatory,
schedule-dependent behavior at any finite $h$, injecting gradient
variance unrelated to the true continuous objective. This provides a
principled explanation for why such methods require careful tuning of
sampling density and can fail to preserve fine structure under extreme
scaling.

\subsubsection{Perceptual Metric Bias Toward Sampling Artifacts}
\label{app:sub:lpips_bias}
The structured micro-textures introduced by point-sampled splatting
(\cref{prop:phase_alias}) not only degrade geometric fidelity but can
also artificially inflate perceptual quality scores. 

We conduct a controlled ablation on the DIV2K 1024-open
benchmark: all optimized CubicSplat primitives are frozen, and random,
non-target-derived noise, Fourier-shaped with a spectral envelope
$|\omega|^{-1/2}$ at amplitude $0.005$, is injected into the final
rasterized image. This perturbation introduces texture-like
high-frequency content that is entirely uncorrelated with the target
signal. Under perturbation, PSNR and SSIM degrade slightly
($25.93/0.7078 \!\rightarrow\! 25.86/0.6974$), consistent with the
added signal energy; however, LPIPS improves
($0.5016 \!\rightarrow\! 0.4636$), closing much of the gap toward
B\'ezier Splatting ($0.448$). The residual difference is expected:
B\'ezier Splatting produces high-frequency residuals that are spatially
correlated with the underlying geometry through its sample-based
rasterization, whereas our synthetic perturbation is only
spectrum-matched and lacks such geometric correlation.

\subsubsection{Contrast with CubicSplat.}
Our construction avoids a sampling comb in the image formation model: the
polyline surrogate controls geometric approximation via a deterministic
Hausdorff bound (\cref{eq:hausdorff_bound}), and the induced coverage
error is uniformly bounded by \cref{eq:coverage_error}. Empirically,
this manifests as near-invariance of PSNR across a wide range of $S$ and
resolutions (Sec. 4.5), confirming that reconstruction quality
is governed by continuous spline geometry rather than discretization
fineness.

\subsection{Error Bounds of \method}
\label{sec:errbound}

We use the polyline surrogate, coverage definitions, and winding-number
membership from Sec. 3.3. The results below formalize
differentiability and approximation error properties of that
construction.

\begin{lemma}[Distance is 1-Lipschitz in $x$~\cite{rockafellarwets1998}]
\label{lem:1lip}
For any nonempty closed set $S\subset\mathbb{R}^2$, the distance
function $\operatorname{dist}_S(x)=\inf_{y\in S}\|x-y\|_2$ satisfies
$|\operatorname{dist}_S(x)-\operatorname{dist}_S(x')|\le\|x-x'\|_2$
for all $x,x'\in\mathbb{R}^2$.
\end{lemma}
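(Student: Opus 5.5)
The plan is the standard triangle-inequality argument, run directly from the definition of $\operatorname{dist}_S$ as an infimum. No earlier result is needed beyond the fact that $S$ is nonempty, which makes the infimum a finite nonnegative real number.

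\textbf{Step 1.} Fix $x,x'\in\mathbb{R}^2$ and an arbitrary point $y\in S$. The triangle inequality gives
\begin{equation*}
  \|x-y\|_2\le\|x-x'\|_2+\|x'-y\|_2 .
\end{equation*}
Since $\operatorname{dist}_S(x)$ is a lower bound for $\|x-y\|_2$ over all $y\in S$, this yields
\begin{equation*}
  \operatorname{dist}_S(x)\le\|x-x'\|_2+\|x'-y\|_2
  \qquad\text{for every } y\in S .
\end{equation*}

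\textbf{Step 2.} Take the infimum over $y\in S$ on the right-hand side. This is legitimate because the left-hand side does not depend on $y$, and it gives
\begin{equation*}
  \operatorname{dist}_S(x)\le\|x-x'\|_2+\operatorname{dist}_S(x') .
\end{equation*}
The only care needed here is to pass to the infimum rather than assume a minimizer exists. Closedness of $S$ would guarantee a nearest point, but the argument does not need one.

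\textbf{Step 3.} Swap the roles of $x$ and $x'$ to get the reverse inequality. Subtracting the finite quantities then gives
\begin{equation*}
  |\operatorname{dist}_S(x)-\operatorname{dist}_S(x')|\le\|x-x'\|_2 .
\end{equation*}

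There is no real obstacle in this proof. For its later use, note that the polyline $\mathcal{P}_p$ is a finite union of segments, hence nonempty and closed, so the lemma applies to $d(x,\mathcal{P}_p)$. Because only the triangle inequality and infima are used, the same argument also works with $x$ fixed and the set perturbed: this gives the Hausdorff-stability inequality
\begin{equation*}
  |\operatorname{dist}_A(x)-\operatorname{dist}_B(x)|\le d_H(A,B),
\end{equation*}
which is presumably what later combines with the $\mathcal{O}(S^{-2})$ Hausdorff bound.
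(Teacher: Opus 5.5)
Your proof is correct and is essentially the paper's argument: apply the triangle inequality for each $y\in S$, pass to the infimum over $y$, and exchange $x$ and $x'$ for the reverse bound. Your side remark that the same reasoning gives $|\operatorname{dist}_A(x)-\operatorname{dist}_B(x)|\le d_H(A,B)$ is also correct, and it is the set-perturbation inequality that the paper's distance-error bound (\cref{eq:dist_error}) actually relies on.
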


\begin{proof}
For any $y\in S$, the triangle inequality gives
$\|x'-y\|_2\le\|x'-x\|_2+\|x-y\|_2$. Taking infima over $y$ yields
$\operatorname{dist}_S(x')\le\|x'-x\|_2+\operatorname{dist}_S(x)$.
Exchanging $x$ and $x'$ gives the reverse bound.
\end{proof}

\begin{proposition}[Local Lipschitzness and a.e.\ differentiability in
parameters]
\label{prop:ae_diff}
Assume the vertices of $\mathcal{P}_S(\theta)$ are locally Lipschitz in
$\theta$ (which holds because each vertex $b(k/S;\theta)$ is a
polynomial, hence smooth, function of the B\'ezier control points). Then
for any fixed $x\in\Omega$:
\begin{enumerate}
  \item The unsigned surrogate distance $\tilde{d}(x;\theta)$ is locally
        Lipschitz in $\theta$, hence differentiable a.e.\ in $\theta$ by
        Rademacher's theorem.
  \item The winding-number indicator
        $\mathbf{1}_{\mathrm{inside}}(x;\theta)$ is piecewise constant
        in $\theta$: it changes only when a polyline edge crosses the
        horizontal ray from $x$, a codimension-1 event in parameter
        space.
  \item The surrogate coverage $\tilde\alpha(x;\theta)$ (in both stroke
        and fill variants) is therefore locally Lipschitz on the
        complement of this codimension-1 transition set, and
        differentiable a.e.\ in $\theta$.
\end{enumerate}
\end{proposition}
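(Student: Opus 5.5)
I will prove the three parts of \cref{prop:ae_diff} in order, treating the surrogate distance as a finite minimum of closed-form segment distances.

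\textbf{Part 1: the unsigned distance.} Write
\begin{equation*}
\tilde d(x;\theta)=\min_{k}\, d\bigl(x,[v_k(\theta),v_{k+1}(\theta)]\bigr),
\qquad v_k(\theta)=b(k/S;\theta).
\end{equation*}
The vertices are polynomial in the control points, so they are locally Lipschitz in $\theta$. I first show that the map from an endpoint pair $(a,b)$ to $d(x,[a,b])$ is $1$-Lipschitz in $\max(\|a-a'\|,\|b-b'\|)$. Parametrize both segments as $(1-\lambda)a+\lambda b$. Then every point of one segment lies within $\max(\|a-a'\|,\|b-b'\|)$ of the point with the same $\lambda$ on the other. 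So the Hausdorff distance between the segments is bounded by that quantity. Since $\operatorname{dist}_S(x)$ is $1$-Lipschitz in the set $S$ under Hausdorff distance (the set-variable analogue of \cref{lem:1lip}, by the same triangle-inequality argument), the claim follows. A composition of locally Lipschitz maps is locally Lipschitz, and a finite minimum of locally Lipschitz functions is locally Lipschitz with the same local constant. Rademacher's theorem then gives differentiability a.e.

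\textbf{Part 2: the winding indicator.} For $x$ not on the polyline, the winding number is the signed count of edges crossing the horizontal ray from $x$. It is integer-valued, so it suffices to show it is locally constant wherever $x\notin\mathcal{P}_S(\theta)$. Equivalently, write it as $\frac{1}{2\pi}\sum_k \angle(v_k-x,\,v_{k+1}-x)$, with each angle taken in $(-\pi,\pi)$. This is continuous in $\theta$ as long as no edge passes through $x$, and an integer-valued continuous function is locally constant. The indicator can therefore change only on the set
\begin{equation*}
Z_x=\{\theta : x\in\mathcal{P}_S(\theta)\}
\end{equation*}
(ray-vertex tangencies don't change the count under a consistent half-open crossing rule). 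To argue that $Z_x$ is codimension one, I note it is contained in the finite union over $k$ of the sets $\{\theta: \det(v_k-x,\,v_{k+1}-x)=0\}$. Each is the zero set of a nontrivial polynomial in the control points: it is nontrivial because translating all control points translates every $v_k$, which makes the determinant nonconstant. A nontrivial polynomial's zero set is a real algebraic hypersurface, so it has Lebesgue measure zero.

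\textbf{Part 3: the coverage.} For strokes, $\tilde\alpha=\phi((r_p-\tilde d)/\tau_p)$. With $\phi$ smooth (hence locally Lipschitz) and $\tau_p>0$, this is locally Lipschitz everywhere, by composition with Part 1. For fills, $\tilde\alpha=\phi(s_p\tilde d/\tau_p)$. On the open complement of $Z_x$, $s_p$ is locally constant, so $\tilde\alpha$ is locally Lipschitz there. Rademacher's theorem applies on that open set, and $Z_x$ is null, which gives a.e.\ differentiability.

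In fact $s_p\tilde d$ is continuous even across $Z_x$, because $\tilde d=0$ exactly where the sign flips. A sharper version could therefore claim global local-Lipschitzness, but the stated result does not need it.

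The main obstacle is making Part 2 rigorous: that crossings form a measure-zero set and that ray-vertex tangencies are harmless. I will rely on the algebraic-hypersurface argument rather than on transversality.
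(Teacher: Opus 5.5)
Your proof is correct. It shares the paper's three-part skeleton: a finite minimum of segment distances plus Rademacher, a sign that is locally constant off a null set, and composition with $\phi$. The key sub-arguments, however, differ.

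In Part~1 the paper asserts that each point-to-segment distance is a composition of locally Lipschitz operations (projection, clamping, norm). Your Hausdorff argument shows instead that $d(x,[a,b])$ is $1$-Lipschitz in the endpoints. That bound is uniform, and it also covers zero-length segments $v_k=v_{k+1}$. There the projection parameter $\langle x-a,b-a\rangle/\|b-a\|^2$ is undefined, so the paper's composition argument does not literally apply.

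In Part~2 the paper writes the winding number as a sum of ray-crossing indicators $w_k$. It claims each $w_k$ is constant off the surfaces $\{y_k(\theta)=y_x\}\cup\{y_{k+1}(\theta)=y_x\}$, but that is not the right set. $w_k$ also flips when an edge sweeps across $x$ itself with neither endpoint at height $y_x$. Conversely, a vertex crossing the ray away from $x$ leaves the total unchanged. Your angle-sum argument pins down the actual transition set $Z_x=\{\theta: x\in\mathcal{P}_S(\theta)\}$, independent of any ray or tie-breaking rule. The polynomial-zero-set bound then gives its nullity cleanly. For nontriviality, note that $\det(a+u,b+u)=\det(a,b)+\det(a-b,u)$, so you need a base configuration with $v_k\neq v_{k+1}$, which obviously exists.

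The paper's route stays closer to the ray-casting implementation and to explicit gradients on cells where the active segment is fixed. Yours is the more rigorous of the two.

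Your closing remark is also correct and strictly strengthens Part~3. Suppose the sign differs at $\theta,\theta'$ in a convex neighborhood with local constant $L$. Then the segment between them meets $Z_x$ at some $\theta''$ where $\tilde d=0$. Hence $|(s_p\tilde d)(\theta)-(s_p\tilde d)(\theta')|\le \tilde d(\theta)+\tilde d(\theta')\le L\|\theta-\theta'\|$. So the fill coverage is locally Lipschitz everywhere, not only off the transition set.
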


\begin{proof}
\emph{Part~(1).}
The unsigned distance to $\mathcal{P}_S(\theta)$ is the pointwise
minimum of $S$ segment-distance functions. For the $k$-th segment with
endpoints $v_k(\theta),v_{k+1}(\theta)$, the point-to-segment distance
is a composition of locally Lipschitz operations (projection, clamping,
Euclidean norm), hence locally Lipschitz in $\theta$. A pointwise
minimum of finitely many locally Lipschitz functions is itself locally
Lipschitz, so $\tilde d(x;\theta)$ inherits this property.

On any neighborhood where the minimizing segment
$k^\star(\theta)=\arg\min_k d_k(x;\theta)$ is unique and locally
constant, $\tilde d(x;\theta)=d_{k^\star}(x;\theta)$ is classically
differentiable with
$\nabla_\theta\tilde d=\nabla_\theta d_{k^\star}$. Nondifferentiability
occurs only on the transition set where two or more segments attain the
same minimum or where the closest point on a segment switches between an
interior projection and an endpoint. Under genericity assumptions on
$\theta$, these events form sets of Lebesgue measure zero. At such tie
points the Clarke subdifferential equals the convex hull of the active
segment gradients, so any tie-breaking choice selects an admissible
subgradient. The practical effect is confined to small, localized
gradient variance rather than systematic bias. Rademacher's theorem then
yields a.e.\ differentiability. Empirically, training remains well
conditioned across seeds, up to 32K primitives and high resolutions.

\emph{Part~(2).}
The winding number $\mathrm{winding}(x;\theta)\in\mathbb{Z}$ is a sum of signed crossing indicators $w_k\in\{-1,0,+1\}$, each determined by whether the $k$-th segment crosses the horizontal ray from $x$. Each
crossing event defines a codimension-1 surface
$\{\theta: y_k(\theta)=y_x \text{ or } y_{k+1}(\theta)=y_x\}$ in parameter space. Away from these surfaces $w_k$ is constant, so $s_p(x;\theta) = \mathrm{sign}[\mathrm{winding}(x;\theta)]$ is piecewise constant with transitions on a measure-zero set.

\emph{Part~(3).}
On the open complement of the codimension-1 transition set,
$s_p(x)$ is constant. The signed residual
$\delta_p = s_p \cdot \tilde{d}$ is then a locally Lipschitz function
of $\theta$ by Part~(1), and
$\tilde\alpha^{\mathrm{fill}} = \phi(\delta_p/\tau_p)$ inherits this
property from the Lipschitz continuity of $\phi$. The stroke variant
depends only on $r_p - \tilde{d}$ and is locally Lipschitz everywhere.
Rademacher's theorem yields a.e.\ differentiability in both cases.
\end{proof}

\subsubsection{Clarke Structure of Nonsmoothness.}

The nonsmooth points of $\tilde{d}(x;\theta)$ arise from two sources:
(i)~switching of the closest polyline segment as $\theta$ varies, and
(ii)~switching between the interior projection and an endpoint as the
closest point on a given segment. Under genericity assumptions on
$\theta$, both events form measure-zero sets in parameter space. At such
points the Clarke subdifferential equals the convex hull of the active
segment-distance gradients, analogously to
\cref{eq:clarke_value}~\cite{clarke1990}.

\subsubsection{Hausdorff Approximation Error and Its Propagation.}
\label{app:sub:hausdorff}

The deviation between a cubic B\'ezier curve and its $S$-segment
uniform-sample polyline is bounded by the standard chord-deviation
estimate for piecewise-linear approximation of $C^2$
curves~\cite{farin2002curves}. Each segment subtends a parameter interval
of length $h=1/S$; by a second-order Taylor expansion, the maximum
deviation on that segment is at most
$h^2\max_t\|b''(t;\theta)\|_2/8$. Taking the maximum over all segments
yields the Hausdorff bound:
\begin{equation}
  \operatorname{H}\!\bigl(\mathcal{B}(\theta),\mathcal{P}_S(\theta)\bigr)
  \;\le\;
  \varepsilon_S(\theta),
  \qquad
  \varepsilon_S(\theta)
  \;=\;
  \frac{M(\theta)}{8S^2},
  \qquad
  M(\theta)\;\coloneqq\;\max_{t\in[0,1]}\|b''(t;\theta)\|_2,
  \label{eq:hausdorff_bound}
\end{equation}
where $\operatorname{H}$ denotes the Hausdorff distance. Let
$d_{\mathcal{B}}(x;\theta)\coloneqq\operatorname{dist}(x,\mathcal{B}(\theta))$
denote the true unsigned distance to the curve locus. The bound decays
as $\mathcal{O}(S^{-2})$ for fixed curvature.

By the 1-Lipschitz property of distance (\cref{lem:1lip}),
\begin{equation}
  \bigl|d_{\mathcal{B}}(x;\theta)-\tilde{d}(x;\theta)\bigr|
  \;\le\;
  \operatorname{H}\!\bigl(\mathcal{B}(\theta),\mathcal{P}_S(\theta)\bigr)
  \;\le\;
  \varepsilon_S(\theta)
  \quad\text{for all }x.
  \label{eq:dist_error}
\end{equation}
Since $K$ is $L_K$-Lipschitz, the induced coverage error is uniformly
bounded:
\begin{equation}
  |\alpha(x;\theta)-\tilde\alpha(x;\theta)|
  \;\le\;
  L_K\,\varepsilon_S(\theta).
  \label{eq:coverage_error}
\end{equation}

In fill mode, \cref{eq:coverage_error} holds on pixels where the
polyline and true curve agree on the inside/outside label. A sufficient
condition is
$\operatorname{dist}(x,\partial\mathcal{B}(\theta))>\varepsilon_S(\theta)$,
under which the winding number is stable under a Hausdorff perturbation
of size $\varepsilon_S$. Where the labels differ, an additional term of
magnitude at most $1$ appears via
$|\mathbf{1}_{\mathrm{inside}}-\tilde{\mathbf{1}}_{\mathrm{inside}}|$.

\subsubsection{Inexact-Oracle Interpretation.}
\label{app:sub:scale_adapt}

Let $F(\theta)$ and $\tilde{F}_S(\theta)$ denote the objectives computed
with reference coverage $\alpha(x;\theta)$ and surrogate coverage
$\tilde\alpha(x;\theta)$, respectively. From \cref{eq:coverage_error}
and the Lipschitz continuity of $\ell$,
\begin{equation}
  |F(\theta) - \tilde{F}_S(\theta)|
  \;\le\;
  L_\ell\,|\Omega|\,L_K\,\varepsilon_S(\theta)
  \;\eqqcolon\;
  \delta_S(\theta),
  \label{eq:func_approx}
\end{equation}
where $L_\ell$ is the Lipschitz constant of $\ell$. For fixed $S$,
$\nabla\tilde{F}_S(\theta)$ is the gradient of the surrogate and is
a.e.\ well-defined by \cref{prop:ae_diff}. Gradient descent on
$\tilde{F}_S$ thus constitutes a standard first-order method on a smooth
surrogate that is uniformly $\delta_S$-close to $F$.

When $S$ adapts to $\theta$ through discrete decisions (e.g.\
curvature-based refinement), applying a stop-gradient through the
discrete choice is equivalent to selecting, at each iterate, a surrogate
$\tilde{F}_{S(\theta)}$ from the parametric family and taking a valid
gradient step on that surrogate. This viewpoint connects to the
framework of inexact first-order
oracles~\cite{devolder2014}, provided one controls not only the function
approximation error $|F-\tilde F_S|$ but also the gradient mismatch
between $\nabla F$ and $\nabla\tilde F_S$ along the trajectory. Without
such a gradient-approximation bound, the strongest guarantee is that
optimization proceeds on a surrogate uniformly close to $F$ in function
value.
\subsection{Coverage Kernels}
\label{sec:kernels}

The kernel $\phi$ in \cref{eq:supp_pipeline} maps the normalized signed
residual $\delta_p/\tau_p$ (\cref{eq:supp_signed_residual}) to soft coverage.
We instantiate $\phi$ differently for the two primitive topologies.

For open strokes the signed residual is $\delta_p = r_p - d$, and we use
a one-sided Gaussian:
\begin{equation}
\phi_{\text{open}}\!\left(\frac{\delta_p}{\tau_p}\right)
\;=\; \exp\!\left(-\frac{\max(-\delta_p,\,0)^{2}}{2\tau_p^{2}}\right).
\label{eq:kernel_open}
\end{equation}
When $\delta_p \ge 0$ (inside the stroke) coverage equals one; when
$\delta_p < 0$ (outside) it falls off as a Gaussian in $|\delta_p|$.
The clamping at $\delta_p = 0$ ensures $C^{1}$ continuity with
$\phi'(0) = 0$, so gradients vanish smoothly at the stroke boundary
rather than exhibiting a discontinuous jump.

For closed fills the signed residual is
$\delta_p = s_p(x)\cdot d$, and we use a logistic sigmoid:
\begin{equation}
\phi_{\text{closed}}\!\left(\frac{\delta_p}{\tau_p}\right)
\;=\; \frac{1}{1 + \exp\!\bigl(-2\,\delta_p/\tau_p\bigr)}
\;=\; \frac{1 + \tanh\!\bigl(\delta_p/\tau_p\bigr)}{2}.
\label{eq:kernel_closed}
\end{equation}
Interior pixels ($\delta_p \gg 0$) receive $\alpha_p \approx 1$; the
smooth $1\to 0$ transition is concentrated in a narrow band of width
$\mathcal{O}(\tau_p)$ around the boundary ($\delta_p = 0$), consistent
with the one-sided convention described in
\cref{rem:onesided_aa}. The logistic profile is $C^{\infty}$, monotone,
and symmetric about the boundary, providing stable gradients for
boundary motion in both inward and outward directions.
\section{Adaptive Subdivision Introduces Gradient Discontinuities}
\label{app:sub:adaptive}

De Casteljau subdivision splits a B\'ezier segment $\mathcal{B}(t)$,
$t\in[0,1]$, at a parameter $t^*\in(0,1)$ into two sub-segments, each
again a B\'ezier curve whose control points follow from the de Casteljau
recurrence~\cite{farin2002curves}. The adaptive variant applies this
split recursively until a flatness criterion is satisfied, yielding a
polyline whose vertex count $S(\theta)$ depends on the current
control-point configuration. In our implementation the flatness tolerance
is $0.5$\,px.

This criterion makes $S$ a piecewise-constant function of $\theta$: at
any parameter $\theta_0$ where the flatness threshold is exactly met,
$S$ jumps discontinuously. The surrogate distance
$\tilde{d}_i(x;\theta)=\mathrm{dist}\!\bigl(x,\,\mathcal{P}_{i,S(\theta)}(\theta)\bigr)$
therefore undergoes a combinatorial change in its minimizing segment
index $k^*$. By \cref{thm:danskin}, the Clarke subdifferential at such
$\theta_0$ is the convex hull of the gradients from all segments active
on either side of the transition, a set that need not contain the true
B\'ezier-distance gradient. This is the same Clarke-averaging pathology
identified for multiple closest points in
\cref{app:sub:diffvg_rooterror}, now triggered at every subdivision
boundary rather than only near medial axes.

Uniform sampling at fixed $S$ eliminates this failure mode: $S$ is
constant, no topological jump occurs, and \cref{prop:ae_diff} applies
globally. Adaptive subdivision further incurs a throughput penalty
because the per-primitive variation of $S(\theta)$ prevents fixed-stride
memory layout. Together, gradient discontinuities and implementation
overhead cause adaptive subdivision to trail uniform $S{=}24$ sampling
by $-0.06$\,dB (open) and $-0.07$\,dB (closed) at $N{=}1024$, with
lower throughput in both modes (Tab. 4 in Sec. 4.4).

\section{Low-Budget Color-Gradient Fitting}
\label{sec:low-budget}
The optimization trace in Fig.~\ref{fig:text-gradient} shows the issue is not only per-primitive color capacity but also whether the oracle allocates scarce primitives to useful regions. Overlapping coverage can efficiently express smooth gradients when optimization remains stable. The near-identical outcomes across $S$ imply that once the oracle is well-conditioned, $S$ mainly controls forward fidelity and rendering cost.

\begin{figure}[htbp]
  \centering
  \includegraphics[width=0.9\linewidth]{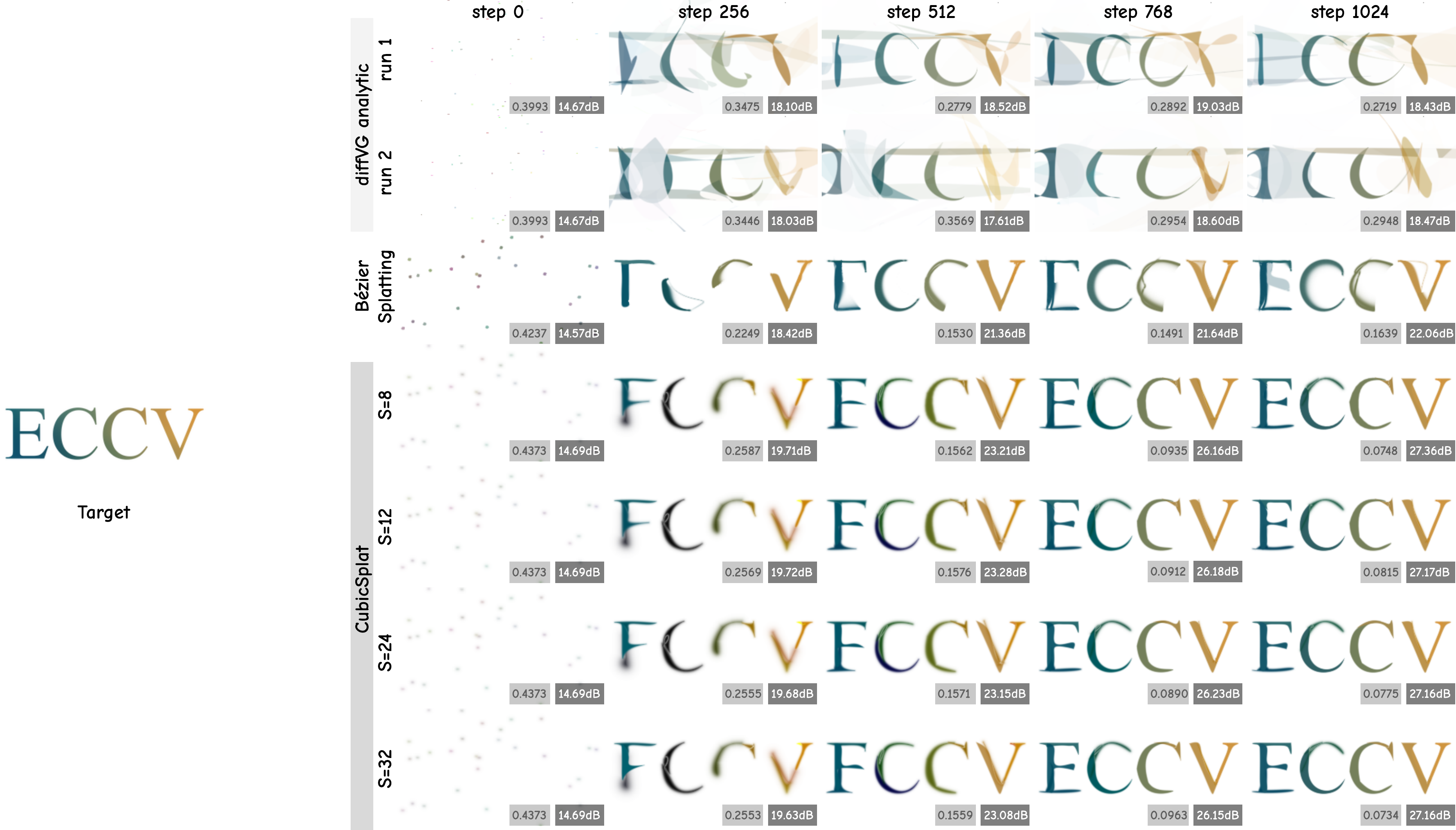}
   \caption{Low-budget gradient/text fitting with 32 closed primitives. Gray labels report LPIPS$\downarrow$ / PSNR$\uparrow$.}
   \label{fig:text-gradient}
\end{figure}

\section{Original Experiment Data}

Tables 5 and 6 in Sec. 4.5 report the per-image
PSNR and rendering time underlying the aggregated scaling analysis. Trained at 2K
($2040{\times}1344$) with 1024 curves and rendered at resolutions up to
24K ($24480{\times}16128$) on a single RTX~4090. PSNR is computed by
downsampling each rendered output back to the training resolution via
bicubic interpolation, ensuring a resolution-independent quality
comparison. Two trends are apparent from the raw numbers.
First, for every image and topology mode, PSNR varies by less than
0.5\,dB across the entire resolution range at any given $S$, confirming
the resolution invariance reported in the main text.
Second, increasing $S$ beyond 24 (open) or 12 (closed) yields
diminishing returns of at most 0.03\,dB while rendering cost grows
linearly, consistent with the $\mathcal{O}(S^{-2})$ geometric error
bound (\cref{sec:errbound}).

\label{sec:orig_data}
\input{tables/scalability}

%% file: tables/scalability.tex
\begin{table*}[t]
\centering
\caption{Resolution scalability comparison of CubicSplat. PSNR (dB) at different rendering resolutions ranging from 2K ($2040\times1344$) to 24K ($24480\times16128$) and sampling rates $K$. Left: unclosed curves; Right: closed curves. CubicSplat maintains consistent quality across all resolutions even at the lowest sampling rate ($K=8$). \textbf{Bold}: best in column; \underline{underline}: second best.}
\label{tab:resolution_scaling}
\resizebox{\linewidth}{!}{
\begin{tabular}{c|c|c|c|c|c|c||c|c|c|c|c}
\toprule
\multirow{2}{*}{\textbf{Image}} & \multirow{2}{*}{\textbf{S}} & \multicolumn{5}{c||}{\textbf{Unclosed}} & \multicolumn{5}{c}{\textbf{Closed}} \\
\cmidrule(lr){3-7}\cmidrule(lr){8-12}
 &  & \textbf{2K} & \textbf{4K} & \textbf{8K} & \textbf{16K} & \textbf{24K} & \textbf{2K} & \textbf{4K} & \textbf{8K} & \textbf{16K} & \textbf{24K} \\
\midrule
\multirow{5}{*}{0004} & 8 & 28.3370 & 28.5303 & 28.5274 & 28.5274 & 28.5273 & 29.2560 & 29.4434 & 29.4404 & 29.4404 & 29.4404 \\
 & 12 & 29.0966 & 29.3242 & 29.3207 & 29.3207 & 29.3207 & 29.2969 & 29.4829 & 29.4799 & 29.4800 & 29.4800 \\
 & 24 & \textbf{29.2937} & \textbf{29.5277} & \textbf{29.5241} & \textbf{29.5241} & \textbf{29.5241} & \textbf{29.3063} & \textbf{29.4909} & \textbf{29.4879} & \textbf{29.4880} & \textbf{29.4880} \\
 & 32 & \underline{29.2755} & \underline{29.5098} & \underline{29.5062} & \underline{29.5062} & \underline{29.5062} & \underline{29.3046} & \underline{29.4899} & \underline{29.4869} & \underline{29.4870} & \underline{29.4870} \\
 & 48 & 29.2717 & 29.5056 & 29.5020 & 29.5020 & 29.5020 & 29.3045 & 29.4899 & 29.4869 & 29.4869 & 29.4869 \\
\midrule
\multirow{5}{*}{0008} & 8 & 23.9494 & 24.0028 & 24.0040 & 24.0041 & 24.0041 & 24.5218 & 24.5781 & 24.5794 & 24.5796 & 24.5796 \\
 & 12 & 24.5520 & 24.6149 & 24.6162 & 24.6163 & 24.6163 & 24.5546 & 24.6110 & 24.6123 & 24.6124 & 24.6124 \\
 & 24 & \textbf{24.6650} & \textbf{24.7295} & \textbf{24.7308} & \textbf{24.7310} & \textbf{24.7310} & \textbf{24.5610} & \textbf{24.6171} & \textbf{24.6184} & \textbf{24.6186} & \textbf{24.6186} \\
 & 32 & \underline{24.6565} & \underline{24.7210} & \underline{24.7223} & \underline{24.7225} & \underline{24.7225} & \underline{24.5605} & \underline{24.6168} & \underline{24.6181} & \underline{24.6182} & \underline{24.6182} \\
 & 48 & 24.6543 & 24.7187 & 24.7200 & 24.7202 & 24.7202 & 24.5604 & 24.6167 & 24.6180 & 24.6182 & 24.6182 \\
\midrule
\multirow{5}{*}{0012} & 8 & 22.7299 & 22.7639 & 22.7656 & 22.7657 & 22.7656 & 23.2831 & 23.3175 & 23.3194 & 23.3196 & 23.3196 \\
 & 12 & 23.4206 & 23.4618 & 23.4637 & 23.4639 & 23.4639 & 23.3189 & 23.3534 & 23.3553 & 23.3555 & 23.3555 \\
 & 24 & \textbf{23.5555} & \textbf{23.5976} & \textbf{23.5995} & \textbf{23.5997} & \textbf{23.5997} & \textbf{23.3254} & \textbf{23.3596} & \textbf{23.3615} & \textbf{23.3617} & \textbf{23.3617} \\
 & 32 & \underline{23.5448} & \underline{23.5871} & \underline{23.5890} & \underline{23.5892} & \underline{23.5892} & \underline{23.3245} & \underline{23.3590} & \underline{23.3609} & \underline{23.3611} & \underline{23.3611} \\
 & 48 & 23.5425 & 23.5847 & 23.5866 & 23.5869 & 23.5869 & 23.3244 & 23.3589 & 23.3608 & 23.3610 & 23.3610 \\
\midrule
\multirow{5}{*}{0016} & 8 & 24.2988 & 24.2739 & 24.2774 & 24.2776 & 24.2776 & 24.5230 & 24.4942 & 24.4978 & 24.4981 & 24.4981 \\
 & 12 & 24.8432 & 24.8167 & 24.8206 & 24.8208 & 24.8209 & 24.5486 & 24.5195 & 24.5232 & 24.5234 & 24.5234 \\
 & 24 & \textbf{24.9444} & \textbf{24.9173} & \textbf{24.9212} & \textbf{24.9215} & \textbf{24.9216} & \textbf{24.5530} & \textbf{24.5237} & \textbf{24.5274} & \textbf{24.5276} & \textbf{24.5277} \\
 & 32 & \underline{24.9355} & \underline{24.9085} & \underline{24.9124} & \underline{24.9127} & \underline{24.9127} & \underline{24.5525} & \underline{24.5233} & \underline{24.5270} & \underline{24.5272} & \underline{24.5272} \\
 & 48 & 24.9335 & 24.9066 & 24.9104 & 24.9107 & 24.9108 & 24.5524 & 24.5233 & 24.5269 & 24.5272 & 24.5272 \\
\bottomrule
\end{tabular}}
\end{table*}

\begin{table*}[t]
\centering
\caption{Rendering time (ms) of CubicSplat at different resolutions and sampling rates $K$. Left: unclosed curves; Right: closed curves. CubicSplat achieves real-time rendering at standard resolutions and scales gracefully to ultra-high resolutions with moderate sampling rates.}
\label{tab:speed_scaling}
\resizebox{\linewidth}{!}{
\begin{tabular}{c|c|c|c|c|c|c||c|c|c|c|c}
\toprule
\multirow{2}{*}{\textbf{Image}} & \multirow{2}{*}{\textbf{S}} & \multicolumn{5}{c||}{\textbf{Unclosed}} & \multicolumn{5}{c}{\textbf{Closed}} \\
\cmidrule(lr){3-7}\cmidrule(lr){8-12}
 &  & \textbf{2K} & \textbf{4K} & \textbf{8K} & \textbf{16K} & \textbf{24K} & \textbf{2K} & \textbf{4K} & \textbf{8K} & \textbf{16K} & \textbf{24K} \\
\midrule
\multirow{5}{*}{0004} & 8 & \textbf{1.3366} & \textbf{3.6481} & \textbf{12.6969} & \textbf{47.5781} & \underline{144.7014} & \textbf{3.4261} & \textbf{10.3347} & \underline{94.7455} & \textbf{144.1402} & \textbf{323.6840} \\
 & 12 & \underline{1.3722} & \underline{3.8530} & \underline{13.7628} & \underline{53.6563} & \textbf{118.3065} & \underline{4.2291} & \underline{13.4328} & \textbf{49.9176} & \underline{196.6026} & \underline{684.1346} \\
 & 24 & 1.7198 & 5.0681 & 18.3116 & 72.4286 & 161.6611 & 7.2113 & 24.9134 & 95.0149 & 372.1476 & 833.3861 \\
 & 32 & 1.9629 & 5.9953 & 21.5207 & 85.8643 & 191.2358 & 9.3311 & 33.1704 & 125.7391 & 492.9564 & 1103.9075 \\
 & 48 & 2.5748 & 7.5838 & 28.4253 & 113.0126 & 252.4241 & 13.7372 & 48.9079 & 186.9312 & 733.0095 & 1639.3223 \\
\midrule
\multirow{5}{*}{0008} & 8 & \textbf{1.2711} & \textbf{3.4583} & \textbf{11.5671} & \textbf{44.3452} & \textbf{98.4073} & \textbf{3.1570} & \textbf{9.8248} & \textbf{33.6656} & \textbf{132.6116} & \textbf{364.0665} \\
 & 12 & \underline{1.3426} & \underline{3.6339} & \underline{12.6706} & \underline{49.4957} & \underline{109.7135} & \underline{4.0597} & \underline{12.9094} & \underline{48.2219} & \underline{188.1898} & 897.8553 \\
 & 24 & 1.9780 & 4.9855 & 16.8722 & 66.5320 & 146.5880 & 7.0901 & 24.0546 & 90.9575 & 355.0058 & \underline{793.3235} \\
 & 32 & 1.8537 & 5.5037 & 19.5870 & 78.2072 & 173.4529 & 8.9947 & 31.8002 & 120.0111 & 468.4794 & 1068.8938 \\
 & 48 & 2.2789 & 7.0046 & 25.5493 & 102.9074 & 229.1957 & 13.2606 & 46.9189 & 178.8374 & 696.2420 & 1555.4501 \\
\midrule
\multirow{5}{*}{0012} & 8 & 5.9444 & \textbf{3.6320} & \textbf{12.7378} & 71.7889 & \underline{144.3720} & \textbf{3.4497} & \textbf{11.0056} & \underline{79.9288} & \textbf{179.8782} & \textbf{369.1505} \\
 & 12 & \textbf{1.3747} & \underline{3.8357} & \underline{13.5214} & \textbf{52.8268} & \textbf{117.5839} & \underline{4.4052} & \underline{14.6077} & \textbf{54.8447} & \underline{215.5997} & \underline{597.5115} \\
 & 24 & \underline{1.6989} & 5.0161 & 17.9776 & \underline{71.0831} & 159.0746 & 7.8738 & 27.5331 & 105.2806 & 411.9568 & 919.6869 \\
 & 32 & 1.9429 & 5.8492 & 20.9490 & 83.7392 & 188.0708 & 10.2560 & 36.5170 & 138.9592 & 544.3014 & 1217.3083 \\
 & 48 & 2.3907 & 7.4063 & 27.7210 & 111.2797 & 245.9742 & 14.8925 & 54.7229 & 208.7724 & 809.3464 & 1833.2964 \\
\midrule
\multirow{5}{*}{0016} & 8 & \underline{1.4810} & \textbf{3.4984} & \textbf{11.7264} & \textbf{43.5104} & \textbf{95.6209} & \textbf{3.3538} & \textbf{10.6105} & \textbf{42.4438} & \textbf{150.5988} & \textbf{545.2135} \\
 & 12 & \textbf{1.2937} & \underline{3.6522} & \underline{12.2768} & \underline{48.2425} & \underline{106.8715} & \underline{4.7180} & \underline{14.5114} & \underline{52.1277} & \underline{204.8995} & \underline{837.3899} \\
 & 24 & 1.5831 & 4.5279 & 16.1065 & 63.7861 & 142.2179 & 7.6296 & 26.0071 & 99.5169 & 389.8328 & 872.8230 \\
 & 32 & 1.7906 & 5.2216 & 18.8405 & 74.9519 & 168.0772 & 9.7809 & 34.6942 & 131.6302 & 507.0663 & 1140.5719 \\
 & 48 & 2.1831 & 6.8385 & 24.5286 & 98.0961 & 219.7528 & 13.8946 & 50.0666 & 192.2672 & 752.7704 & 1683.4849 \\
\bottomrule
\end{tabular}}
\end{table*}